%% file: main.tex
\documentclass[sigconf, 10pt, nonacm]{acmart}
\usepackage{tikz}
\usepackage{amsmath}
\usepackage{booktabs}
\usepackage[english]{babel}
\usepackage{blindtext}
\usepackage{pifont}
\usepackage[most]{tcolorbox}
\usepackage{xcolor}
\usepackage{enumitem}
\usepackage{siunitx}
\usepackage{graphicx}
\usepackage{subcaption}
\usepackage{multirow}
\usepackage{multicol}
\usepackage{colortbl}
\usepackage{siunitx}
\usepackage{hyperref}
\usepackage{float}
\usepackage[ruled,vlined, linesnumbered]{algorithm2e}
\usepackage{listings}

\usepackage{algpseudocode}
\usepackage{makecell}
\usepackage{graphicx}
\usepackage{url}

\usepackage{amsthm}

\theoremstyle{definition}

\definecolor{LineNumberColor}{rgb}{0,0,1}

\newcommand{\systemname}{Tommy}

\newenvironment{parafont}{\fontfamily{ptm}\selectfont}{}
\newcommand{\Para}[1]{\vspace{2pt}\noindent\begin{parafont}\textbf{\textit{#1}}\end{parafont}}

\renewcommand\footnotetextcopyrightpermission[1]{} % removes footnote with conference info
\setcopyright{none}

\begin{document}

% \title{\systemname{}: Probabilistic Fair Ordering of Timestamped Events}
\title{Probabilistic Fair Ordering of Events}
% \title{One Upping Leslie, Probabilistically}
% \title{Can We \textit{Expect} To Go Beyond Lamport?}
% \author{}
% \date{}
% \author{
% Paper \#427, 12 pages body, 15 pages total
% }

\author{
\rm{\Large Muhammad Haseeb$^{\text{\scriptsize[n]}}$ \enskip
    Jinkun Geng$^{\text{\scriptsize[n][s]}}$ \enskip
    Aurojit Panda$^{\text{\scriptsize[n]}}$ \enskip
    Radhika Mittal$^{\text{\scriptsize[u]}}$ \enskip \\
    }
\rm{
    \Large 
 Nirav Atre$^{\text{\scriptsize[c][j]}}$ \enskip
 Srinivas Narayana$^{\text{\scriptsize[r]}}$ \enskip
 Anirudh Sivaraman$^{\text{\scriptsize[n]}}$ \enskip
    }
\\
  {\Large
  $^{\text{\scriptsize[n]}}$\textit{New York University}\enskip 
  $^{\text{\scriptsize[s]}}$\textit{Stony Brook University}\enskip 
  $^{\text{\scriptsize[r]}}$\textit{Rutgers University}\enskip 
  $^{\text{\scriptsize[u]}}$\textit{UIUC}\enskip
  $^{\text{\scriptsize[c]}}$\textit{CMU}\enskip
  $^{\text{\scriptsize[j]}}$\textit{Jane Street}\enskip
  }
  % \vspace{0.1in}
}

\renewcommand{\shortauthors}{}

\begin{abstract}
    \input{abstract}
\end{abstract}

\maketitle
\input{introduction}

\input{related_work}
\input{background}
% \input{drift-vs-offsets}
\input{design-overview}

\input{design-likely-happened-before-relation}

\input{design-achieving-partial-order}
\input{design-online-sequencing}

\input{evaluation}
\input{limitations}

\input{future-work}
\input{conclusion}

\bibliography{main}

\input{appendix}

\end{document}

%% file: abstract.tex
A growing class of applications depends on fair ordering, where events that occur earlier should be processed before later ones. Providing such guarantees is difficult in practice because clock synchronization is inherently imperfect: events generated at different clients within a short time window may carry timestamps that cannot be reliably ordered. Rather than attempting to eliminate synchronization error, we embrace it and establish a probabilistically fair sequencing process. \systemname{} is a sequencer that uses a statistical model of per-clock synchronization error to compare noisy timestamps probabilistically. Although this enables ordering of two events, the probabilistic comparator is intransitive, making global ordering non-trivial. We address this challenge by mapping the sequencing problem to a classical ranking problem from social choice theory, which offers principled mechanisms for reasoning with intransitive comparisons. Using this formulation, \systemname{} produces a partial order of events, achieving significantly better fairness than a Spanner TrueTime-based baseline approach.

%% file: introduction.tex
\section{Introduction}

Sequencers are fundamental building blocks in distributed systems, providing a mechanism to impose a total order on events that may occur at different locations. They underpin many core protocols, including consensus and concurrency control. In classical consensus protocols such as Paxos~\cite{paxos} and Raft~\cite{raft}, a leader implicitly acts as the sequencer, both determining a total order of operations and coordinating agreement on that order. More recently, network-based sequencers have been proposed to offload sequencing responsibilities from higher level protocols. Systems such as NOPaxos~\cite{nopaxos}, Hydra~\cite{hydra}, NeoBFT~\cite{sigcomm23-neobft}, and SwitchBFT~\cite{nsdi26-switchbft} explicitly decouple sequencing from other protocol logic, introducing dedicated sequencer components to improve efficiency.

At a high level, the role of a sequencer is straightforward: assign ranks to incoming messages to establish a total order for processing. In most existing systems, this ranking is independent of when a message was generated. Instead, messages are ordered according to when they are observed by the sequencer, effectively implementing a FIFO policy. For many traditional applications, this behavior is sufficient, since the system only requires \emph{some} ordering, even if that ordering is arbitrary.

A growing class of applications requires a stronger notion of ordering, which we refer to as fair ordering. Unlike FIFO ordering, \emph{fair ordering requires that an event generated earlier (in terms of wall-clock time) be processed before an event generated later}. FIFO ordering approximates this goal only under limited deployment circumstances e.g., specially engineered networks deployed by high-frequency trading exchanges where clients are connected to the server with meticulously equalized data-path lengths. However, generally FIFO ordering is not a proxy for fair ordering of events. 
% \notepanda{This discussion about FIFO approximating fair ordering is a bit strange to me. The problem is not reordering (which most people will think of as packet reordering, e.g., within a flow or path), but rather that events are generated from different clients, take different (potentially unequal) paths to the sequencer, and such. Even if they took the same path, you are subject to the OS processing them in receive order. I would rephrase this to say something like `In some limited circumstances, where clients and sequencer are carefully deployed and event rates are limited, FIFO can approximate fair ordering. However, this is not true if request rates increase, as is the case with most applications, or if deployment assumptions are violated , as is the case in the cloud.'}\as{Agree with Panda. I would also quantify small and high request rates.}. However, many modern applications generate large volumes of events within extremely short time intervals, making FIFO ordering an inadequate proxy for fairness. 
% \todo{a motivating plot showing FIFO's departure from fair ordering.}

The need for fair ordering is particularly pronounced in financial exchanges, ad exchanges, and other competitive systems~\cite{adex, cloudex, dbo, shoebot1, shoebot2, bot3, aerial_marketplace}, where ordering decisions directly affect economic outcomes. We refer to such systems as \emph{auction apps}. In these environments, hundreds of clients can generate millions of events within a narrow time window following a sensitive trigger, such as a market moving announcement that is broadcast simultaneously to all participants~\cite{cloudex, dbo, jasper, onyx}. In this setting, ensuring that messages generated earlier are processed before those generated later is essential for preserving fairness. The absence of a practical fair ordering primitives is a key reason why many financial exchanges continue to operate in private and specialized data centers incurring high costs, rather than using cheaper public cloud infrastructure~\cite{onyx}. 

\begin{figure}[!t]
    \centering
    \includegraphics[width=0.5\textwidth]{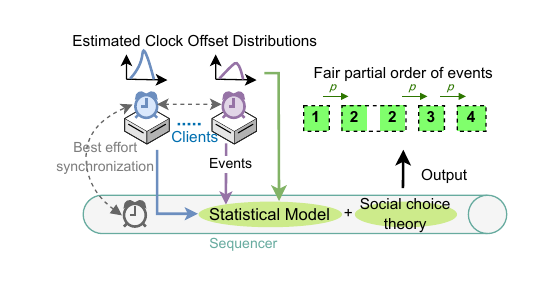}
    \vspace{-1cm}
    \caption{\textmd{The sequencer, \systemname{}, uses clock correction distributions \& timestamps to achieve a fair ordering of events.}}
    \label{fig:sys_model}
\end{figure}

Prior work has recognized the need for fair ordering in auction apps, but existing solutions either rely on strong and often impractical assumptions, such as near perfect clock synchronization~\cite{cloudex, jasper}, or are tightly coupled to the logic of a specific application, limiting their generality~\cite{dbo}.  In this paper, we describe a general abstraction for fair ordering, which we term a \emph{fair sequencer}. A fair sequencer ensures that, with high probability, an earlier event is assigned a lower rank than a later event. 
% \na{Very minor, but I found it a little bit odd that fair sequencer (a term you coined) came under "Recent Efforts". The paragraph titles you have below are very helpful, but this one might be skippable.}

% \as{Minor, but it might help to have a smoother transition to Lamport's clocks}
\Para{Classical Context:} Ordering of events is a well-studied problem. Lamport's foundational work~\cite{leslie_ordering} introduced the \emph{happened-before} relation ($\rightarrow$). If two events are causally related, meaning one can influence the other, then they can be ordered using this relation. The happened-before relation is transitive, allowing a set of pairwise causally related events to be partially ordered. Events that are concurrent, for which no causal relationship can be established, remain unordered. We generalize the ordering relation to account for such events to achieve fair ordering. 

% We use event timestamps from clients along with clock corrections distributions to probabilistically establish an event order, thus ordering even those that are not causally related.
% \na{Ah ok, I had a similar sentiment here: the interleaving between old/new ideas seems to break the flow. I think you can describe just Lamport here, and move this last sentence (pretty much as-is) to "Our Approach". All you want to convey at this point is: (a) there was this classical approach (happened-before) that enabled partial ordering of events; (b) unfortunately the classical approach doesn't work for fair ordering because there is fundamental, inherent uncertainty in timestamp acquisition. Everything beyond that is your contribution, so it's important that delineation is clear from the paragraph title.}
% order of events even if they are not causally related in the classical sense. 
% Our focus is precisely on these concurrent events. Ordering them is inherently difficult, yet crucial for achieving fairness in the applications we consider.
% \greencheck \notepanda{It is a bit unintuitive that we use timestamps, assume clocks are not synchronized, but know that $\rightarrow \subseteq \xrightarrow{p}$. It is fine if that is the case, but the statement `Our focus is precisely on these concurrent events' weirded me out as I read ahead.}

% \as{Related to Panda's comment from before. The problem is more general than "network reordering". Maybe use a different term? How about "even when events are generated by clients with differing travel times to the sequencer"}
\Para{Fundamental Challenge:} Achieving ideal fair ordering would require perfect clock synchronization, so that timestamps of the events occurring at different clients can be directly compared without being concerned about the data-path length experienced by the messages carrying information about events. However, perfect clock synchronization is impossible in asynchronous or bounded synchronous networks~\cite{limits_on_clock_sync, lundelius_clock} due to fundamental uncertainty in communication delays. In particular, it is impossible to synchronize the clocks of $n$ processes more tightly than $U(1 - 1/n)$, where $U$ denotes the uncertainty in link delays~\cite{lundelius_clock}. This inherent limitation makes achieving fair ordering a challenge.

% \Para{An Approximate Solution and Its Limits:} In environments where the time resolution of interest is much coarser than clock synchronization errors, a simple fair sequencing strategy suffices. One can wait until at least one message has been received from every client and then repeatedly release the message with the smallest timestamp, assuming in order delivery from each client. This approach yields a fair total order when clock errors are negligible. Such conditions can be achieved within a single data center, where clock synchronization errors can be reduced to nanoseconds~\cite{huygens}, making the approach viable for microsecond scale systems. However, in settings that demand finer resolution, or in multi data center deployments where clock errors can reach tens of microseconds, this strategy breaks down. These limitations motivate the need for a more general fair sequencing mechanism.

\Para{Our Approach:} 
% \na{Right now this paragraph is very matter-of-fact; your ideas are \textit{really} exciting, and I think this needs a little bit of flair to surface that. You might say something like...}\haseeb{I have rephrased. please let me know if you have something else in mind} 
We build on the insight
% \as{Provide more intuition for this observation? It's not obvious. The sentence afterwards "clients can estimate" reads well.}\haseeb{its mainly an insight, just a common statistics insight, not an observation per se, re-worded}
that \emph{two noisy timestamps from two clients can be meaningfully compared if the distributions of their clock offsets relative to a reference clock are known}. 
% Clients can estimate these distributions by collecting synchronization probes from a standard clock synchronization protocol. The distributions are shared with the sequencer, enabling probabilistic comparison of timestamps. 
% We isolate the effects of clock drift-related errors and offset-related errors and show how the drift-related errors are easily handled by typical clock sync. protocols while the offset-related errors cannot be reduced beyond a strict bound~\cite{lundelius_clock}. 
% \systemname{}, thus mainly deals with offset-related errors in the event timestamps. 
% Figure~\ref{fig:sys_model} illustrates the architecture. 
Using the insight, we introduce a new relation, \emph{likely happened before}, denoted $\xrightarrow{p}$, where $p$ represents the probability that one event occurred before another. Similar to the classical \emph{happened-before} relation, $\xrightarrow{p}$ induces a partial order, but one that also captures order fairness albeit probabilistically. 
% Ordering based on $\xrightarrow{p}$ forms the basis of fair ordering in our system,~\systemname{}. 
Figure~\ref{fig:sys_model} illustrates the architecture where the event timestamps and clock offsets distributions are shared with the sequencer that achieves a fair order of events using a statistical model and tools from social choice theory, as described below.

There are two central challenges in using the $\xrightarrow{p}$ relation for fair sequencing beyond constructing the relation for any two events. First, unlike the \emph{happened-before} relation, $\xrightarrow{p}$ is not necessarily transitive, which complicates ordering more than two events. Second, the sequencing must be performed in an online fashion i.e., a sequencer cannot assume that all events that need to be ordered have already been seen by the sequencer. An event to be seen by the sequencer in the future may need to be ordered before an event seen earlier. This complicates the sequencing in an online setting.

% \greencheck \notepanda{The structure here gets a bit strange to me, the next bold (Intransitivity) is challenge 2, the bold after is unrelated to anything (I don't think it needs to be a bold), the third bold (Sequencing in an online fashion) is challenge 3, and the fourth bold is something weird unrelated to the challenges above, but instead about eval. Where is challenge 1?}

\begin{figure}[!t]
    \centering
    \includegraphics[width=0.4\textwidth]{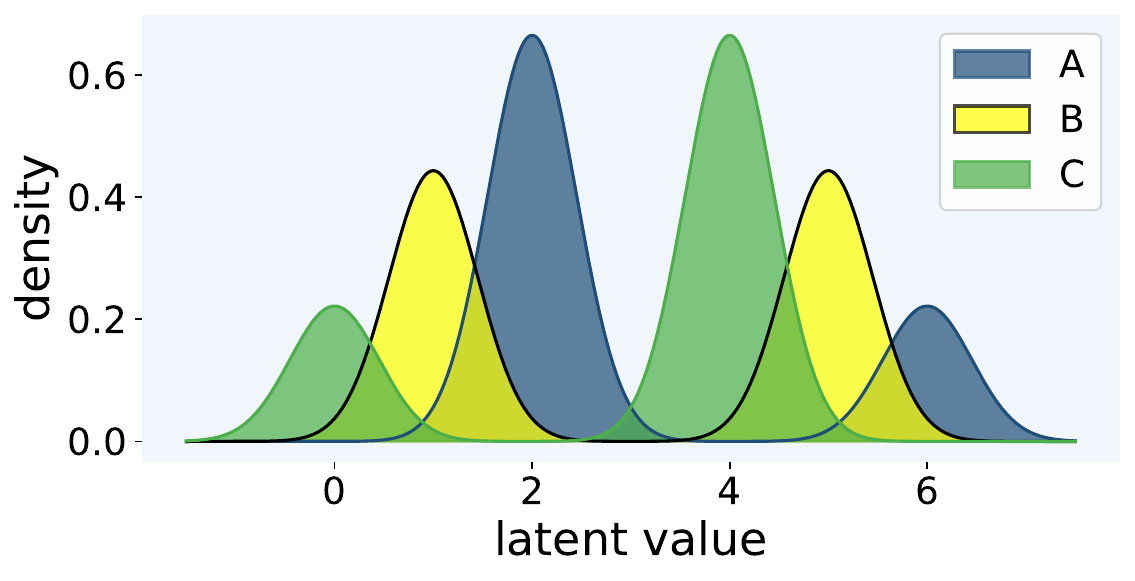}
    \vspace{-0.5cm}
    \caption{\textmd{Three distributions that have intransitive behavior: $P(A>B) > 0.5, P(B>C) > 0.5, P(C > A) > 0.5$. 
    % Distributions are mixture of Gaussians with $\sigma=0.45$. 
    }}
    \label{fig:intrans_dists}
    \vspace{-0.5cm}
\end{figure}

\Para{Intransitivity of $\xrightarrow{p}$: }Perhaps the most counter-intuitive property of $\xrightarrow{p}$ is its intransitivity. It is possible for the probability that event A precedes B to be high, the probability that B precedes C to be high, and yet the probability that C precedes A to also be high. It stems from the nature of how probabilities may behave~\cite{efron_dice}: the probability distributions may overlap in a way to show the cyclic behavior as seen\footnote{Each is a mixture of Gaussians, with standard deviation of 0.45. Centers are visible in the diagram. Known solution forms exist to calculate the shown probabilities.} in Figure~\ref{fig:intrans_dists}. 
% \as{Can you give brief intuition for this intransitivity?}
As a result, the $\xrightarrow{p}$ relation, a probabilistic relation, may exhibit intransitivity, preventing direct construction of a fair order from pairwise relations. Fortunately, social choice theory extensively deals with ranking of objects where the pairwise comparison operator can be intransitive~\cite{condorcet-book}. We reduce fair sequencing to the ranked voting problem in social choice theory to find an ordering of multiple events using the $\xrightarrow{p}$ relation. While intransitivity may still preclude the possibility of imposing a total order (due to \textit{Condorcet cycles}, a phenomenon we discuss in \S\ref{subsec:ranking-under-uncertainty}), this transformation allows us to extract a partial order in spite of adverse clock uncertainty.

% \Para{Ranking In Social Choice Theory: } A well-studied ranking problem in Social Choice Theory literature deals with finding an election winner based on pairwise preferences shown by an electorate. A candidate who beats every other candidate by majority of electorate's preferences is called a \emph{Condorcet} winner. However, such a winner may not exist because ``majority of electorate's preferences'' rule can be intransitive. In such a case, the literature proposed several solutions to find a winner(s). We adopt one such solution which ranks all candidates consistent to the comparison rule but may assign the same rank to more than one candidate. 

\Para{Online Sequencing:} Online sequencing poses challenges beyond those encountered when ordering a fixed set of events. These challenges stem from the possibility that future events may need to be assigned the same or a lower rank than events which have already arrived at the sequencer.
% already observed ones which is possible depending on the clock corrections distributions.
To resolve this, we extend our statistical model to the online setting as follows: before committing to the rank of an event, the sequencer identifies, via a statistical model, the range of timestamps for all clients that may require an equal or lower rank than the current event's timestamp; it then waits for any events that may fall within this range, and sequences them jointly with the current event to find a stable order. 

% The sequencer tracks clocks of clients via heartbeats received from them. A heartbeat with a timestamp $t$ received by the sequencer guarantees that any event generated by the respective client with timestamp smaller than $t$ must have already be seen by the sequencer (assuming clients communicate with the sequencer using an ordered-delivery channel e.g., TCP). This helps the sequencer in determining whether it has waited enough to receive events with certain timestamps from a client. 

% \Para{Evaluating such a system is hard:}
% \notepanda{Some of this sounds a bit like we are whining. I think we can put this later.}
% Evaluating such a system is fundamentally hard because of the lack of ground truth. Ground truth is the order in which all events are generated (at as well as  across clients). We resort to a simulation so that we may have ground truth. We explain how NS3 is leveraged for the simulations. 

We show that \systemname{} outperforms a Spanner TrueTime~\cite{spanner} based baseline, which represents each event timestamp as an uncertainty interval and orders events by sorting these intervals. \systemname{} produces orderings that are closer to the ideal fair order and, in some cases, exactly match it. To the best of our knowledge, \systemname{} is the first system to demonstrate that probabilistic comparisons of noisy timestamps, combined with tools from social choice theory, can yield effective fair event ordering.

% We use NS3 so that global time (simulator's time) is avaialable as a ground truth. \\

%% file: related_work.tex
\section{Related Work}
\subsection{Financial Exchanges}

\Para{Ordering in cloud-native exchanges.}
CloudEx~\cite{cloudex} and Onyx~\cite{jasper,onyx,onyx_report} are recent proposals that aim to provide fairness in trading among market participants. Several market participants may send buy or sell commands to an exchange upon some sensitive event in the market. Exchanges often offer the guarantee that whoever asked first, would get their ask met first. For approximating this guarantee, CloudEx and Onyx approximate fair ordering of events generated by clients. Both systems assume tightly synchronized clocks across all participants and the exchange server. While such assumptions are practical in special circumstances, e.g., all VMs residing in a single data center using advanced clock synchronization~\cite{huygens}, they fail to deliver desirable fairness in general settings. Generally, clock errors are large enough that two closely occurring events (e.g., microseconds apart) cannot be ordered correctly using the event timestamps from the local clocks of the respective clients. \systemname{} does not rely on tight clock synchronization; instead, it derives ordering decisions from the observed clock error distribution, which encodes more robust information for determining relative order. \systemname{} still requires some clock synchronization protocol (e.g., NTP~\cite{ntp}) so that it can build a distribution of clock corrections that is used by \systemname{}'s statistical model for computing the fair (partial) order of events with the help of tools from social choice theory. 
% Crucially, however, Tommy's statistical model can also accommodate cases where the clock synchronization error is \emph{not} negligible compared to inter-event time.

% achieves this by assuming sufficiently synchronized clocks which works fine in a single data center environment. \\

\Para{Ordering in on-prem exchanges.}
On-prem exchanges establish fair ordering by investing substantial engineering effort in bespoke infrastructure~\cite{equal_wire_length}. In particular, they ensure that all participants are connected to the exchange server using equalized data paths (e.g., same-length cables and low jitter switches). Such fine-grained engineering down to physical wire lengths is feasible in on-prem exchanges, where the entire infrastructure is purpose-built for trading. However, this approach does not generalize (e.g., to cloud environments or wide-area networks), where bespoke hardware solutions are impractical and incompatible with the needs of diverse stakeholders. \systemname{}, on the other hand, does not require any specialized infrastructure.

\subsection{Departing from Arbitrary Order}
Recent work, Pompe~\cite{osdi20-pompe}, observes that in blockchain systems, different participants have a direct stake in the specific ordering of events, and that it is therefore necessary to not be content with an arbitrary order of events produced by a sequencer. Pompe allows each replica node of a replicated state machine to provide hints about the desired order of some events. It draws on social choice theory to reframe ordering in the presence of Byzantine failures as a problem of preference aggregation under adversarial voters. In particular, it allows each participant to use physical clock timestamps as ordering indicators and collectively “vote” on a democratic ordering of events. It does not guarantee (as it is not its goal) that the final order of events will be fair in the sense that an earlier generated event will be ordered ahead of a later generated event. 

Pompe’s approach relies on the assumption that clocks at correct nodes are  accurate and treats timestamps as reliable and definitive ordering indicators. This assumption simplifies the ordering problem and avoids pathological cases such as intransitivity of ordering indicators. In contrast, in \systemname{}, we argue that clock synchronization is inherently imperfect and that timestamps (from clients' local clocks) alone cannot fully determine precedence relationships between events. Tommy incorporates this imperfect clock synchronization into its statistical model for ordering. 

%% file: background.tex
\section{Background}

\subsection{Fair Ordering}

\Para{Definition:} We consider a set of events generated at multiple clients, where each event occurs at some global time at its originating client. A sequencer observes these events and produces a total order over them. The ordering is fair if, whenever one event occurs earlier than another in global time, it is ordered before the later one. 

The definition characterizes an ideal notion of fairness that is generally impossible to realize: ordering events that occur arbitrarily close in time, but at different clients, would require resolving clients' clock errors below the fundamental uncertainty imposed by network delays, which is provably unattainable even with optimal synchronization algorithms~\cite{lundelius_clock}.
% \as{Maybe throw in a reference? I think the Lundelius and Lynch paper gets at this, but you should check the fineprint.}
% \todo{should we add some proof in appendix about impossibility or is it too obvious?}\greencheck\notepanda{I don't think its obvious, but part of the problem is that the notion of `global time' is not obvious (physics does not allow for it).}\haseeb{I have updated the wording to be more self-contained, without using the word global time. it might be ok now?}
Considering this, We relax the target and define \emph{probabilistic fair ordering}. 

\Para{Probabilistic Fair Ordering: } If an event occurs earlier than another according to the global time, the sequencer orders the earlier event before the later one with high probability. There are no guarantees on how high the probability can be. 
% In some cases, the probability may be closer to 1 while in other cases, it cannot be higher than 0.
% \as{The "in some cases" is confusing. What's the point of that sentence? Or the previous one? The first sentence makes sense.} \haseeb{removed the sentences. wanted to convey that we dont provide mcuh guarantees but i think that is laready conveyed}

Our sequencer, \systemname{}, achieves probabilistic fair ordering. It assigns ranks to all the observed events, where it attempts to assign a higher rank to events that occurred later and lower ranks to events that occurred earlier. It may assign same ranks to some events if it is not confident about their relative order. In our experiments, we show \systemname{} outperforms baselines substantially. 

\Para{Failure Model:} We assume a fail-stop model. As we explain later, each client sends heartbeats to the sequencer periodically. Absence of heartbeats from a client for some period of time is a signal for the sequencer to exclude that client from consideration. As it is a standard mechanism deployed widely, we do not discuss it further. Moreover, an interesting model to explore in the future is Byzantine failures. However, for now we believe the problem is interesting and complex enough without assuming Byzantine failures.

\vspace{-0.5cm}
%TODO: Anirudh to resume here.
\subsection{Applications requiring fair ordering}
Fair ordering is widely regarded as a fundamental requirement for many practical systems, including financial exchanges~\cite{cloudex,dbo, jasper,onyx,cme}, online gaming~\cite{syncms-2002}, and ad auctions~\cite{adex}. In such systems, fairness requires that clients have equal access to the exchange server: the order in which client writes are applied should reflect the order in which clients initiated those writes, rather than differences in network latency between clients and the server.
% \greencheck\notepanda{I think it would be good to precisely define what `fairness' means. The current statement (with a preamble of characterization sounds a bit odd, and as I indicate next, it is a bit hard to connect this to some of the consequenses.}\haseeb{subsection added in the background} Clients perform some action e.g., bidding on an asset offered by the exchange server. The server needs a way to decide which clients asked first so that it can fairly disburse assets if the demand exceeds the supply e.g., only one unit of an asset is available to be sold while multiple buyers are interested and the exchange offers the service that whoever asked first will get their ask met first. 

To achieve fairness, a straightforward approach is to rely on synchronized clocks across the exchange and all participants. Each participant adds a timestamp, via a trusted component, to their order, before sending it to the exchange. The exchange sorts incoming orders using the attached timestamps to achieve fair ordering.
% their orders sent to the exchange which sorts the incoming orders based on timestamps to achieve fair ordering. \greencheck\notepanda{I do not know why equal opportunity + timeliness $\implies$ synchronized clock. Equal opportunity sounds like a fairness requirement, which we know even holds in full asynchrony. Timeliness of information dissemination sounds like a requirement on message delay. Neither particularly sound like things that need synchronized clock above. Therefore, making it clear would be better IMO.}\haseeb{I have removed the two dimensions story of fairness as its irrelevant. only one dimension is important and i have extended the above para to explain more.}
Indeed, several recent systems~\cite{cloudex,jasper,onyx} adopt this design to build fair-access exchanges in cloud environments. However, the effectiveness of such approaches depends on near-perfect clock synchronization. Although recent advances have significantly improved clock synchronization~\cite{huygens,aws-clock-sync,nsdi22-graham}, it remains unrealistic to generally assume \emph{perfectly/tightly synchronized clocks}. In fact, classic results by Lundelius and Lynch~\cite{lundelius_clock} show that no algorithm can synchronize the clocks of $n$ servers more closely than $U(1 - 1/n)$, where $U$ denotes the uncertainty in message delay (i.e., the difference between the maximum and minimum message delays). Consequently, under high-frequency and high-concurrency workloads, even microsecond-scale clock skew can lead to ordering errors at the exchange. For example, Client~1 may generate an order 1~\textmu s earlier than Client~2, yet the exchange, after receiving the orders from both clients, may incorrectly prioritize Client~2’s order over Client~1’s and may sell a scarce asset to Client~2 incorrectly inferring that the Client~2 generated the order first.
% \greencheck\notepanda{Reinforcing the point above: but neither fo the `two dimensions' in the previous paragraph are violated by Client 2 being ordered before Client 1 in this case: both have equal opportunity (both run a request each), and neither has more information than the other.}

Motivated by these limitations, we depart from approaches that rely on near-perfect clock synchronization and instead pursue an approach that works with best-effort synchronized clocks, with the caveat that the the ordering would be fair probabilistically i.e., it outputs an order which it deems \emph{likely to be fair} given the information it has but may make mistakes. In \S\ref{sec:evals}, we show a scenario where the probabilisitic nature of \systemname{} leads it to make more mistakes than a simple baseline solution, but generally, \systemname{} outperforms the baseline. 
% \notepanda{It would help to add something saying what this means. There are many interpretations here, including WHP the order is fair (what you mean, though the use of WHP in other parts of the paper worries me a bit, given mathematically that would mean that the set of cases where it is incorrectly ordered is a set of measure 0), the probability that client feels it was sequenced fairly is high, etc.}.

% Several applications requires a mechanism to order events based on their occurrence order. Hard to achieve that order because hard to get perfectly synchronized time across clients. \\

% exchange. online-gaming/auction ad 

% JK: Merge into the section above
% \subsection{Clock sync. is the main barrier in such an ordering}
% If we had perfectly synchronized clocks, the problem becomes easier. Just account for network reordering while recieving events (cloudex, onyx type mechanism can be used), and then sort events based on their generation timestamps. \\

% But perfect clock sync is impossible. Lower bound on clock sync (Lynch paper). \\

% So we need a way to fairly order events, that;s where this paper comes into the picture. \\

\subsection{Clock Drift vs.\ Offset Considerations}
% \greencheck\notepanda{In trying to read through the paper, I am a bit unsure why this text is where it is. You need some transition to connect this to what is going on. I wonder if it might not be better to put this in Section 3, perhaps before the current 3.3 (Probability-based ranking\ldots).}

% \greencheck\notepanda{The phrase `while rendering the drift errors\ldots' worries me: the paper does not say that \systemname{} is a clock synchronization system. As such, it cannot affect drift errors, so rendering them small sounds weird. If you are a synchronization approach, you need to be much more accurate about that. Thus far it read like the system was built atop some reasonable synchronization protocol.}\haseeb{yes we are not a synchronization protocol, we just assume some stuff from the deployed protocol. re-worded.}

% \greencheck\notepanda{This implication \textbf{does not} imply (as suggested earlier) that Tommy renders drift error to 0. It merely shows that drift error can be driven arbitrarily low, but this comes at the cost of increased resource usage (to send probes). In fact, Figure 9 seems to make sense if and only if there is such a tradeoff. I would rephrase this in terms of trade-offs or something.}

Clock errors have two major components: drift and offset. Drift corresponds to rate mismatch between clocks while offset denotes the instantaneous mismatch between the values of two clocks. 
Most clock synchronization protocols can effectively correct for clock drift and reduce it to reasonable levels, whereas offset correction faces fundamental limitations.
Drift can be reduced either by increasing synchronization frequency or by explicitly estimating clock frequency.
For example, in NTP-style protocols, if clocks have fractional drift bounded by $\rho_{\max}$ and probes are exchanged every $W$ seconds, the worst-case relative drift accumulated between two clocks (in the $W$ second period) is bounded by
\[
\delta_{\text{drift}} \le 2\rho_{\max} W ,
\]
which decreases linearly as the probe interval $W$ decreases.
Similarly, protocols such as Chrony~\cite{chrony} estimate oscillator properties to infer clock frequency and continuously adjust clock readings to compensate for drift.
% , reportedly reducing it to $\approx$0.25ppm.

In contrast, offset uncertainty cannot always be reduced sufficiently.
Lundelius and Lynch~\cite{lundelius_clock} show that in a network with one-way message delay uncertainty
\(
U = d_{\max} - d_{\min},
\)
no clock synchronization algorithm can guarantee a maximum clock error smaller than
\[
\delta_{\text{offset}} \ge U\!\left(1 - \frac{1}{n}\right),
\]
where $n$ is the number of participating clocks, $d_{\max}$ is the maximum possible delay and $d_{\min}$ is the minimum possible delay.
This bound holds even in the absence of drift.

The best-case scenario (lowest offset error) occurs when $n = 2$, (excluding the trivial case of $n=1$), for which the lower bound reduces to
\(
\delta_{\text{offset}} \ge \frac{U}{2}.
\)
Even in this minimal setting, offset uncertainty remains non-zero and independent of the probe frequency and in practice, significantly larger than the drift-induced error.

\Para{Concrete example.}
With conservative drift $\rho_{\max}=20$ ppm and synchronization probes every $W=\SI{20}{\milli\second}$, the worst case drift induced misalignment is $\delta_{\text{drift}} \le 2 \cdot 20 \times 10^{-6} \cdot 0.02 = \SI{0.8}{\micro\second}$. In contrast, with one way delay uncertainty $U=\SI{40}{\micro\second}$, even two clients incur an offset uncertainty of at least $\delta_{\text{offset}} \ge U/2 = \SI{20}{\micro\second}$. Thus, frequent probing reduces drift error below \SI{1}{\micro\second}, while offset uncertainty remains an order of magnitude larger and cannot be ignored at microsecond granularity. 
% Consider clocks with conservative drift $\rho_{\max} = 20$~ppm, synchronized using probes every $W = \SI{20}{\milli\second}$ (i.e., high frequency).
% The maximum drift-induced misalignment between two clocks is
% \(
% \delta_{\text{drift}} \le 2 \cdot 20 \times 10^{-6} \cdot 0.02
% = \SI{0.8}{\micro\second}.
% \)
% Now consider a network with one-way delay uncertainty $U = \SI{40}{\micro\second}$.
% Even in the best case of only two clients, the offset uncertainty is lower bounded by
% \(
% \delta_{\text{offset}} \ge \frac{\SI{40}{\micro\second}}{2}
% = \SI{20}{\micro\second}.
% \)
% Thus, with frequent synchronization probes, drift contributes less than \SI{1}{\micro\second} of error, while offset uncertainty remains an order of magnitude larger. An application operating at microseconds granularity can safely ignore drift-induced error in this case, but not the offset-induced error. 

\Para{Implication:} Accordingly, we assume that deployed clock synchronization protocol adequately account for drift-related errors and makes it negligible w.r.t. the time granularity at which the application operates. And, we focus on reasoning about offset-related errors, the unavoidable component, in our fair ordering problem. In evaluation, we measure the impact of drift error on the fairness.
% larger drift error leads to deteriorated fairness achieved by \systemname{}. 

\subsection{Ranking Under Uncertainty and Social Choice Theory}\label{subsec:ranking-under-uncertainty}

% \greencheck\notepanda{The previous two sentences repeat what was said above.} 
Given two events $e_1$ and $e_2$ with generation timestamps  $t_1$ and $t_2$, respectively from the clients' local clocks, it may be incorrect
% \notepanda{We have not defined a safety property, and this statement thus seems a bit weird. Fairness is usually not treated as a safety property. But more importantely, it is not clear to me it is `unsafe', it seems more that you cannot conclude that $t_1 < t_2 \implies o_1\rightarrow o_2$, but that doesn't seem to make it safe or unsafe.} 
to conclude that $o_1$ should be ordered ahead of $o_2$ solely because $t_1 < t_2$, because of the clock synchronization errors. 
% \greencheck\notepanda{The previous 3 or 4 sentences seem to all repeat the same information: they are not presenting a logical sequence but rather restating the same information with different words.}
Clock synchronization errors must be explicitly accounted for when determining the relative order of events. Motivated by this observation, we develop an approach that ranks events by jointly considering their generation timestamps and the probability distribution of clock corrections/offset adjustments. 

A simple way to account for clock synchronization errors is to construct an uncertainty interval around each event timestamp, inspired by Spanner TrueTime~\cite{spanner}. And then events could be ordered by sorting these intervals i.e., events corresponding to non-overlapping intervals can get distinct ranks and those corresponding to the overlapping intervals can be given the same rank to account for high uncertainty. This is indeed our baseline ordering technique that we use to compare against \systemname{}. Although a good starting point, this approach is less helpful if the gaps between occurrence of events are small enough and the clock errors are comparatively larger as it may lead to most events having overlapped uncertainty intervals. In such a scenario, it becomes incumbent to zoom into the overlapping uncertainty intervals and attempt to assign distinct ranks to the events based on where the most \emph{uncertainty mass} lies in the intervals. This is the precise motivation for our probabilistic fair ordering.  

% Our approach relies on the assumption that clock synchronization errors are \emph{learnable}\notepanda{You need to follow this up with a phrase defining what `learnable' means, e.g., `, that is, a program can determine the distribution of errors'.}. We argue that this assumption\notepanda{Hard to follow the argument unless the reader knows what the assumption is.} is realistic in practice, as common clock synchronization protocols—such as NTP~\cite{ntp} and Huygens~\cite{huygens}—continuously estimate synchronization errors and adjust clock frequencies accordingly to minimize drift. Consequently, in our ranking approach (\S\ref{sec:model-free}), clock errors can be described using a learned error distribution. When comparing timestamps, this distribution is incorporated to derive a \emph{probabilistic} ordering relation, i.e., \emph{likely-happened-before}, between orders, rather than a deterministic one.

While the probabilistic ordering more accurately reflects the underlying uncertainty all the while enabling a useful sequencer, it also introduces cases where a total order of events cannot be established. A probabilistic relation to order two events can lead to cycles as such a relation is not necessarily transitive. 
For example, given three orders $o_1$, $o_2$, and $o_3$, it is possible that $o_1$ precedes $o_2$ with high probability, $o_2$ precedes $o_3$ with high probability, and yet $o_3$ precedes $o_1$ with high probability. In this case, pairwise comparisons fail to yield a globally consistent ordering.
Such situations correspond to \emph{Condorcet cycles} in social choice theory~\cite{condorcet-book}.
% \greencheck\notepanda{I am curious why you stick to Concordet's paradox, which is about voting systems vs. Arrow's paradox, which is the generalization. There isn't voting in the ordering bit here, so I am a bit confused/surprised.}\haseeb{arrow's paradox is that every voting system must fail somewhere. condorcet paradox is about how one common voting system fails i.e., by leading to cyclic behavior etc. So I choose to go with condorcet as I can establish parallels between $\xrightarrow{p}$ and the pairwise preference relation used by condorcet and how both these lead to cycles.}

In a ranking problem in social choice theory, deciding the winner of an election based on pairwise preferences aggregated from an electorate (where each voter states which candidate they prefer among a pair of candidates) may lead to cyclic behavior i.e., majority may prefer candidate A over B, B over C and C over A, barring the selection of a clear winner. A winner in this problem is called a Condorcet winner and it is not guaranteed to exist. The diagnosis of this problem is that the pairwise preference relation used in ranking is not necessarily transitive, similar to the pairwise probabilistic ordering relation. 

To address this challenge, social choice theory proposes the notion of the \emph{Smith set}~\cite{smithset}, which groups together all alternatives involved in Condorcet cycles such that every candidate in the Smith set beats every other candidate outside the set in a pairwise comparison, while the candidates within the set may or may not win over each other. Then the outcome of the election system is a set instead of single (Condorcet) winner. 

Inspired by this concept, we design our sequencing algorithm to cluster events into Smith sets and enforce a deterministic ordering across sets. Within each Smith set, we intentionally refrain from imposing a strict order; instead, we delegate the choice of intra-set ordering to the application, allowing flexibility in how ties or near-ties are resolved, leading to a partial order instead of a total order produced by the sequencer. 

% Another background important to understand the paper: social choice theory's method of finding election winner based on pairwise comparisons. A Condorcet winner may not always be possible, a smith set notion is needed. We use that. \\

%% file: design-overview.tex
\section{\systemname{} Sequencer Design}
% \greencheck \notepanda{Calling this an Overview makes me believe there is another section which has detailed design, which there is not. How about retitling this section `\systemname{} Sequencer' or something.}

% Figure~\ref{fig:sys_model} illustrates the \systemname{} architecture. \systemname{} depends on a best effort clock synchronization protocol. Although \systemname{} embraces that clocks can never be perfectly synchronized~\cite{lundelius_clock}, but a loose/best effort clock synchronization is at the heart of estimating clock corrections distributions. All client machines and the sequencer run such a clock synchronization protocol in which clients synchronize their clocks to the sequencer’s clock which is assumed to be the global clock. \greencheck \notepanda{This sentence comes a bit out of nowhere: the previous sections have been trying to convince the reader synchronized clocks are not good, but here we introduce the need to synchronize (I understand why, but just emulating a reader). It would be good to start off by having a sentence such as \systemname{} uses almost synchronoized clocks, or something to ease the reader in.}
As a clock synchronization protocol runs, it adjusts each client's clock periodically. These adjustments constitute the correction distributions for each client. 

As an event $e$ occurs at a client $c$, the client generates a timestamp, $t_e$, from its local clock to denote when the event occurred. The event timestamp, $t_e$, may not exactly represent the moment $e$ occurred as there is some lag, however minute, between the occurrence of $e$ and generating the $t_e$ in any real-world system. We ignore this lag. 

Clients convey the information about events to the sequencer by sending messages that contain event timestamps. A client $c$ ensures that the message for $t_e$ is not sent after the message for $t_{e'}$ if $t_e < t_{e'}$ and $e, e'$ are generated at $c$. The sequencer uses the event timestamps as well as the clock correction distributions to assign ranks to the events that constitutes probabilistic fair ordering. 

The sequencer uses a statistical model
% \greencheck\notepanda{Is `us' the sequencer, in which case just say `The sequencer uses a statistical model to \ldots'} 
to find the probability of one event $e$ being earlier than another event $e'$ w.r.t. the global clock by only comparing the event timestamps $t_e$ and $ t_{e'}$.
% \greencheck\notepanda{The term `local timestamp' is a bit confusing. In most contexts, `local timestamp' is the timestamp at the node, in this case the sequencer's local time. What you mean is the timestamp carried in the message. It might be good to rename it, perhaps to the `message timestamp' (or `event timestamp').}\greencheck\notepanda{The conflation between messages and events is going to make it hard on a reader. I strongly suggest picking one term.}. 
This is enabled by treating global time as a random variable. Once the pairwise probabilities for every two events are calculated, the sequencer uses a social choice theory based solution to deal with intransitivities and find a partial order of all events. 

\Para{Likely-happened-before relation:} The foundation of fair ordering is a pairwise relation, which we term likely-happened-before, denoted as $\xrightarrow{p}$, where $p$ represents the probability that one event occurred before another. \systemname{} models global time at which the event occurred as a random variable: for an event $e$, global time of event occurrence, $T_e$ is expressed as a linear combination of the event timestamp, $t_e$, from the respective client's local clock and the client's clock corrections distribution. By comparing realizations of these random variables for two events, the sequencer computes the probability that one event precedes the other. This probability directly defines the $\xrightarrow{p}$ relation.

Constructing the $\xrightarrow{p}$ pairwise relation is the key gadget that enables \systemname{}’s fair ordering. In contrast, attempting to eliminate all synchronization errors from each client’s local clocks to achieve fair ordering (by simply ordering events based on sorted timestamps) poses a substantially more difficult problem than probabilistically comparing two global timestamps.

% Using clock correction distributions to pose global time as a random variable. Finding one random variable smaller than the other an easy problem compared to removing all the error from client's local clocks. \\

% Finding pairwise probabilities for N messages can become expensive. We propose an efficient approach leveraging the insight that clock error distributions are stable on short time horizons. \\

% Algorithm \ref{alg:pairwise-points}. But we do not use it and replace with a more efficient version as explained later. 

\Para{Ordering multiple events using the $\xrightarrow{p}$ relation:} Unlike the classical \emph{happens-before} ($\rightarrow$) relation, the $\xrightarrow{p}$ is not  a transitive relation. So obtaining an order of more than two events using the pairwise $\xrightarrow{p}$ relation is a challenge. We map the problem of finding order of events using the pairwise relations to a similar problem in the social choice theory literature: ranking candidates in an election using pairwise preferences (i.e., Candidate A is preferred over candidate B) obtained from an electorate. The reduction of ordering of events problem to a well-known ranking problem lends us mature tools that explicitly accounts for possible intransitivity in the pairwise preferences/ordering. 

% Algorithm \ref{alg:find-order}. 

% Given pairwise comparisons, finding order of several events is not straightforward as the comparisons are probabilistic and thus, the constructed likely-happened-before relation is not necessarily transitive. \\

% Solution to intransitivity inspired from Social choice theory: repeated smith set extraction. \\

% Each SCC/smith set gets one rank, while successive SCCs gets higher ranks.  \\

\Para{Online Sequencing: } As a sequencer receives events from clients, it needs a mechanism to determine when the current ordered set of events is stable enough to be emitted (to any downstream application). We consider a set of events stable if the probability that any future event would arrive and be ordered ahead of (or equivalently, interleaved before) any event in the set is sufficiently low. Furthermore, achieving high performance as events arrive in a stream requires processing incoming events with minimal latency, which in turn demands efficient implementations of both the statistical model and the social choice theory–based ranking mechanism. We propose a \emph{probability-sketch} based efficient algorithm for finding probability of an event to have occurred before another. Our algorithm leverages the observation that clock-correction distribution of a client does not change significantly at short timescales. In the subsequent sections, we expand on how the pairwise probability is calculated to construct the $\xrightarrow{p}$ relation (\S\ref{sec:model-free}), how the relation is used to get an ordering of several events (\S\ref{sec:graph}) and how the sequencer operates in an online fashion (\S\ref{sec:online}). 

% Sketch of clock error.

% Algorithm \ref{alg:precompute-diffs} and \ref{alg:pairwise-query} because Algorithm \ref{alg:pairwise-points} is too expensive to run in an online fashion. 

% To counter network variability, sequencer waits for one message/heartbeat from each client, orders them and finds out whether the ordering is stable by finding the probability that any new message will not need an equal or lower rank.  \\

%% file: design-likely-happened-before-relation.tex
\section{Constructing Likely-Happened Before, $\xrightarrow{p}$, Relation}
\label{sec:model-free}
\label{likely-happened-before}

% \greencheck \notepanda{It is strange to see a section about a relation (a mathematical object that exists independent of whether it can be computed or not) start with the words `System Model', and hten a siscussion of TCP. Do you perhaps mean `Computing LHBR'?}

An event $i$ with event timestamp $t_i$ from the local clock of the respective client, $c(i)$, has the global timestamp $T_i$. Under ideally synchronized clocks, $T_i = t_i$. However, in the real-world, every clock synchronization protocol is imperfect. We represent the global timestamp as:
\(
T_i = t_i + \theta_i,
\)
% \greencheck\notepanda{I do not know how to interpret the next sentence, nor how it connects to anything.}
where $\theta_i$ denotes the client’s clock correction relative to the sequencer's clock (the global clock) at the instant event $i$ is generated.
% \greencheck\notepanda{and $T_i$ denotes\ldots}. 
The correction $\theta_i$ is unknown but is modeled as a random variable following a probability distribution $f_{\theta_i}$.

Clock correction distributions may differ across clients due to heterogeneous synchronization conditions, such as temperature variation across a data center or asymmetric network latency~\cite{clock-temperature-report,huygens}. Each client estimates its own correction distribution by collecting clock synchronization probes. A clock synchronization probe is an adjustment made to a client's clock by a clock synchronization protocol. A sequencer can observe $t_i$ for each event and $f_{\theta_i}$ for each client. 

% Finding pariwsie probability, the math goes here. This probability is the basis for likely-happened before relationship.  \\

\Para{Comparing two global timestamps:} It is impossible to compute $T_i$ exactly but we can compare two global timestamps $T_i$ and $T_j$ by only observing $t_i$ and $t_j$ using a probabilistic analysis that assumes the knowledge of clock correction distributions $f_{\theta_i}$ and $f_{\theta_j}$.

We analyze the probability that one event precedes another. This probability is called the \emph{preceding-probability}:
\[
    \mathbb{P}(T_i < T_j \mid t_i, t_j) = \mathbb{P}(t_i + \theta_i < t_j + \theta_j).
\]
Rearranging:
\(
    \mathbb{P}(T_i < T_j \mid t_i, t_j) = \mathbb{P}(\theta_i - \theta_j < t_j - t_i).
\)
Since $\theta_i$ and $\theta_j$ are random variables, their difference follows a new distribution:
\[
    \Delta \theta = \theta_i - \theta_j \sim f_{\Delta \theta}.
\]
Then the preceding-probability is given by:
\[
    \mathbb{P}(T_i < T_j \mid t_i, t_j) = f_{\Delta \theta}(t_j - t_i) = \int_{-\infty}^{t_j - t_i} f_{\Delta \theta} d\Delta.
\]

If $f_{\theta_i}$, $f_{\theta_j}$ are modeled as any known distributions e.g., Gaussian, then we may get a known solution form for the above integral.
% \greencheck\notepanda{The following phrasing is scary without additional information. Leads to worse fairness in what circumstances? Is it a universal? Is it because of how experiments were setup? What leads to this?}
We have explored modeling $f_{\theta_i}$ and $f_{\theta_j}$ using such distributions, by fitting
these distributions to clock synchronization probes collected in our
experiments. However, we find that imposing such modeling assumptions
consistently leads to worse fairness compared to making no distributional
assumption at all.
%we find that modeling the $f_{\theta_i}$, $f_{\theta_j}$ as a Gaussian, Bimodal or Pareto distributions, by fitting a distribution to the clock sync. probes in our experiments, leads to worse fairness than making no such modeling assumption. 
Therefore, we make no assumption about the shapes of $f_{\theta_i}$, $f_{\theta_j}$. We evaluate the above integral empirically using the following estimator:

% Difference of two independent random variables smaller than x

% where the difference distribution is \(Z = c_x - c_y\)
% :

\[
\hat f_{\Delta \theta}(t_j - t_i)
\;=\;
\frac{1}{|f_{\theta_i}|\,|f_{\theta_j}|}
\sum_{\theta_i \in f_{\theta_i}}
\sum_{\theta_j \in f_{\theta_j}}
\mathbf{\textbf{1}}\!\left[\theta_i - \theta_j < t_j - t_i\right].
\]

% Intuition for the estimator:
% for all possible i,j: se if i-j smaller than x then add 1
% doing that gives us count of all events of interest
% divide by total events and you get probability

% \[
% \hat f_{\Delta \theta}(x)
% \;=\;
% \frac{1}{|f_{\theta_i}|\,|f_{\theta_j}|}
% \sum_{\theta_i \in f_{\theta_i}}
% \sum_{\theta_j \in f_{\theta_j}}
% \mathbf{\textbf{1}}\!\left[\theta_i - \theta_j < x\right].
% \]

The estimator makes an independence assumption between $f_{\theta_i}$ and $f_{\theta_j}$
% \greencheck\notepanda{And thus, the estimator is not model-free, as was claimed above. Just say you don't assume a distribution for $f_\theta$ rather than the broader model-free terminology.}. 
The assumption might not be always true but this is a simplification choice we make and may lead to sub-optimal ordering. In evaluation, we note \systemname{} outperforms existing mechanisms despite our simplification assumption. Algorithm \ref{alg:pairwise-points} implements the above estimator to find the pairwise preceding probability for two events.

\begin{algorithm}[t]
\fontsize{10pt}{10pt}\selectfont
\caption{Pairwise Ordering Probability}
\label{alg:pairwise-points}
\KwIn{
Local timestamps $x, y$; \\
Correction probes $\mathcal C_x = \{c_x^1,\dots,c_x^{n_x}\}$; probes carry synchronization protocol's adjustments to the clock\\
Correction probes $\mathcal C_y = \{c_y^1,\dots,c_y^{n_y}\}$
}
\KwOut{$p = \Pr(x + c_x < y + c_y)$}

$\tau \leftarrow y - x$ \\
$count \leftarrow 0$ \\
$total \leftarrow n_x \cdot n_y$

\ForEach{$c_x \in \mathcal C_x$}{
    \ForEach{$c_y \in \mathcal C_y$}{
        \If{$c_x - c_y < \tau$}{
            $count \leftarrow count + 1$
        }
    }
}

\Return $count / total$
\end{algorithm}

\Para{Likely-happened-before and intransitivity: }Once we can find preceding-probability for two events, this is the basis of the \emph{likely-happened-before}, $\xrightarrow{p}$, relationship. It is a pairwise relation where $i \xrightarrow{p} j$ denotes that event $i$ happened before event $j$ with probability $p$. This is the foundation for the fair ordering achieved by \systemname{}. 

If $\xrightarrow{p}$ was a transitive relation then finding a total order for a set of events would be trivial. We define transitivity as: If $i \xrightarrow{p > 0.5} j$ and $j \xrightarrow{p > 0.5} k$ then we must have $i \xrightarrow{p > 0.5} k$. However, the transitivity may not always hold. In case it was to hold, then fair ordering would simply be a topological ordering of the graph where nodes are events and an edge from $i$ to $j$ represents $i \xrightarrow{p > 0.5} j$. 

In some special cases, the transitivity for $\xrightarrow{p}$ may hold. For example, Appendix~\ref{app:guassian} shows that if clock correction distributions are Gaussian, then the transitivity provably holds. We also implement the proof in Lean. The existence of transitivity depends on the behavior of probabilities and it generally does not hold~\cite{efron_dice}. Finding the necessary and sufficient conditions under which transitivity may hold is left for future work. 

% \greencheck\notepanda{I am entirely unsure why I am reading about Efron dice here, and why it connects to the rest of it. I would drop this.}
% A famous example of intransitivity in probabilities is a special kind of dice called Efron dice~\cite{efron_dice}. Given 3 Efron dices, the following three statements are true: (i) Dice A has high probability of beating Dice B i.e., rolling Dice A and B would yield a number on the top face of Dice A that is higher than the number on the top face of Dice B , (ii) Dice B has high probability of beating Dice C and (iii) Dice C has high probability of beating Dice A. This cyclic behavior shows that we cannot rank all the tree dices to show which dice is most likely to win or which dice is least likely to win. 

Due to the intransitivity, $\xrightarrow{p}$ may not always allow a total order of events. In our ns-3 simulations with a data-center wide background workload, we see that such intransitive behavior is indeed manifested while attempting to order events based on $\xrightarrow{p}$, thereby preventing us from achieving a total order of events. Despite the existence of intransitivity, in \S\ref{sec:graph}, we will discuss how a partial order of events can still be achieved by using lessons from social choice theory. 

% Once likely-happened-before relation is constructed, show how this relation can be intransitive. Give examples of Efron dice and how multi-modality leads to such behavior. Tell that we see multi-modality in a real-world like setting i.e., NS3 DCN simulation. Intransitivity is important to mention because happened-before is transitive and allows to order more than 2 events. With likely-happened before, not so straightforward as we discuss in next section. \\

% Tell that Gaussian is too nicely shaped to lead to intransitivity, we have a Lean proof. But real world latencies does not give us Gaussian.  \\

% \greencheck\notepanda{I was a bit unsure about why this is here? It seems to be about an algorithm that was presented a sectio or more ago. But why is it separated by an ocean?}

\Para{Efficient events processing: } Pairwise preceding probabilities are a core primitive used repeatedly by the fair ordering procedures. As we discuss next, in Algorithm~\ref{alg:find-order}, constructing the matrix $P$ requires computing pairwise preceding probabilities for all $O(n^2)$ pairs of events. Using Algorithm~\ref{alg:pairwise-points} to compute the preceding probability for a single pair of messages incurs quadratic time complexity with respect to the number of samples in the clock correction distributions.

% \greencheck\notepanda{I think it would be nice to say something about the fact that this is a tradeoff: you are adding an additional assumption (stability of clock correction) to get performance.}
We observe that clock correction distributions remain stable over time
% \greencheck\notepanda{How short? How long? What happens if this assumption is violated?}\haseeb{We dont know that and it is mostly the matter of doing benchmakring in the env where the sequencer is supposed to run.  but I dont want to go into details of these things as they only matter if deploying sequencer into production and i think has less of an impact on the message of the paper. so i am using wording to avoid saying these things explicitly. does that make sense? I have added a new para after this para to briefly talk about it.} 
and only change if the background workload shifts significantly or the underlying network connectivity changes, i.e., some link failure leading to rerouting of packets through a path which may have different characteristics than the abandoned path. Assuming the distributions stay stable over timescale of $M$ minutes, the sequencer can periodically pre-compute
% \greencheck\notepanda{How does an administrator pick the periodicity?}\haseeb{Check the next para. baseically i jsut want to avoid that discussion. }
\emph{difference distributions} using Algorithm~\ref{alg:precompute-diffs} after every $M$ minutes. Once these difference distributions are available, the pairwise preceding probability for any pair of events can be computed using Algorithm~\ref{alg:pairwise-query} in logarithmic time as opposed to the quadratic time with respect to the number of samples in the clock correction distributions. 

One disadvantage of pre-computing difference distributions is that if the difference distributions become outdated, the resultant values of preceding probabilities may become incorrect leading to wrong ordering. The distributions may become outdated if the timescale of refreshing the distributions is too large which is relative to the network environment in which \systemname{} is deployed. We excuse ourselves from further discussion of this problem but may require more research if \systemname{} is to be deployed in production. 

\begin{algorithm}[t]
\fontsize{10pt}{10pt}\selectfont
\caption{Precompute Difference Distributions}
\label{alg:precompute-diffs}
\KwIn{
Clock corrections probes sequences for all clients $\{\mathcal C_1, \ldots, \mathcal C_n\}$,
where $\mathcal C_i = \{c_i^1, \ldots, c_i^{m_i}\}$
 is the sequence for one client}
\KwOut{Sorted difference lists $\mathcal D_{ij}$ for all $i,j$}

\For{$i \gets 1$ \KwTo $n$}{
    \For{$j \gets 1$ \KwTo $n$}{
        $\mathcal D_{ij} \leftarrow \emptyset$ \\
        \ForEach{$c_i \in \mathcal C_i$}{
            \ForEach{$c_j \in \mathcal C_j$}{
                Append $(c_i - c_j)$ to $\mathcal D_{ij}$
            }
        }
        Sort $\mathcal D_{ij}$
    }
}

\Return $\{\mathcal D_{ij}\}_{i,j=1}^n$
\end{algorithm}

\begin{algorithm}[t]
\fontsize{10pt}{10pt}\selectfont
\caption{Pairwise Ordering Query}
\label{alg:pairwise-query}
\KwIn{
(1) Events $e^{x}_i$, $e^{y}_j$ where $x$, $y$ are local timestamps from clients $i$, $j$.
(2) $\mathcal D_{ij}$
}
\KwOut{$p_{ij} = \Pr(x + c_i < y + c_j)$}

$\tau \leftarrow y - x$ \\
$k \leftarrow \textsc{UpperBound}(\mathcal D_{ij}, \tau)$ \tcp*{Count values $< \tau$}
\Return $k / |\mathcal D_{ij}|$
\end{algorithm}

%% file: design-achieving-partial-order.tex
\begin{algorithm}[t]
\fontsize{10pt}{10pt}\selectfont
\caption{Probabilistic Fair Ordering}
\label{alg:find-order}
\KwIn{
(1) Event log $E = \{e^t_1,\dots,e^{t'}_n\}$ where each event $e^{t'}_n$ was generated by client $n$ at local timestamp $t'$; \\
(2) $\{\mathcal D_{ij}\}_{i,j=1}^n$ (generated by Algorithm \ref{alg:precompute-diffs})
}
\KwOut{
Ordered batches of events $\mathcal{B}_1,\mathcal{B}_2,\dots$
}

\tcp{Generate all to all matrix for pairwise probabilities $P_{ij}$ for all events }
$P \leftarrow \textsc{PairwiseProbMatrix}(E, \{\mathcal D_{ij}\}_{i,j=1}^n)$

$G \leftarrow$ directed graph with vertices $\{1,\dots,n\}$ \\
\For{$i \gets 1$ \KwTo $n$}{
    \For{$j \gets 1$ \KwTo $n$}{
        \If{$P_{ij} > \tfrac12$}{
            add directed edge $i \rightarrow j$ to $G$
        }
    }
}

Compute strongly connected components (SCCs) of $G$ \\
Collapse SCCs into a DAG $G'$ \\
Compute a topological ordering of $G'$ \\
\Return SCCs in topological order
\end{algorithm}

\section{Achieving Partial Order}
\label{sec:graph}

Although $\xrightarrow{p}$ is helpful in ordering two events, it is not a general ordering relation because of its intransitivity. 
% \greencheck \notepanda{I would weaken prohibit to something else. The thing is intransivity means it is not an ordering relation at all: even partial orders need to be transitive.} 
This problem is analogous to a problem in social choice theory literature called \emph{Condorcet Paradox}~\cite{condorcet-book} briefed next. 

In a ranking problem in the social choice theory, given a set of pairwise preferences of election candidates, the goal is to find the election winner, called a \emph{Condorcet winner}. A pairwise preference is a preference of one candidate between two candidates and is analogous to the $\xrightarrow{p}$ relation where both are intransitive. A \emph{Condorcet} winner is a candidate who would receive the support of more than half of the electorate in a one-on-one race against any one of their opponents. 
% \greencheck \notepanda{Rephrase next sentence: `The paradox shows that such a winner need not exist.'}
The \emph{Condorcet Paradox} is that such a winner may not exist. However, a generalization of \emph{Condorcet} winner called a \emph{Smith Set} is guaranteed to always exist. 

A Smith Set is the smallest set of candidates that are pairwise unbeaten by every candidate outside of it. We extend the idea of finding Smith Set to get a partial order of events: (i) find a smith set from the events, (ii) assign the same rank to all events in that smith set, (iii) remove those events from the set of events and, (iv) repeat while incrementing the assigned rank and until all events have been ranked/ordered. 

The above solution provides a 
% \greencheck \notepanda{And here we come back to my comment above: you had to do this to recover even a partial ordering relation. So just start by saying intransitivity means $\xrightarrow{p}$ is not an ordering relation.} 
partial order as some events are not ranked w.r.t each other, i.e., they are assigned the same rank. 
% \greencheck \notepanda{The switch to ranking here is a bit strange, might need some help earlier in the paper. Until this point, it seemed like you were drawing an analogy to ranking in social choice theory, but here it suddenly turns to assigning ranks even for your system. This is fine, but a little pre-intro might help.}
We forsake achieving a total order and settle for a partial order. An application can have its own rules for extending the partial ordering to a total order, an already exercised option where applications decide how to deal with concurrent events when the $\rightarrow$ relation leads to a partial order. The sequencer releases batches (smith sets) of events at a time to any downstream application.
% \greencheck\notepanda{It is a bit strange to not see something to the effect that forsaking this is OK.}
An application can decide for itself how to deal with tied ranks. For example, a financial exchange may try to randomly find a bidder from a batch of bidders to sell an asset to or divide an asset proportionally to all bidders in a batch. 

The above mentioned repeated smith set extraction can be implemented as a graph problem. The sequencer creates a graph where each node is an event. An edge exists between two nodes $i, j$ if $i \xrightarrow{p > 0.5} j$. Then the sequencer runs Tarjan's Algorithm~\cite{tarjan-algo}  on the resultant graph, to find all the Strongly Connected Components (SCC) of the graph. Finally, it finds a topological ordering of the SCCs. Each SCC is a smith set and they are emitted from the sequencer in the order of their topological order. Algorithm~\ref{alg:find-order} implements the above.

% Ordering more than two events, is not easy because of intransitivity. Show a cycle, topo order not possible.  \\

% We have a similar problem in social choice theory literature: ranking candidates given votes. A solution is to find smith set, we use that. Repeatedly find smith set, remove, find next smith set and so on.  \\

% This repeated smith set extraction can be achieved by a simple algorithm: if we have a graph representing pairwise comparison results for every two nodes, do SCC decomposition and topo sort the SCCs.  \\

% How we get the graph: Use likely happened before relations to construct an all-to-all graph and remove edges with probability lower than half (assume no p=0.5 edges). This graph is a tournament.  \\

% The tournament graph can have cycles. (In case of guassian, no cycles, topo order possible). So a topo order may not be possible. But we can do SCC decomposition of the graph and get topo order of SCCs. These SCCs are smith sets. We rank SCCs and that is our fair ordering. It is partial (each SCC can have more than one node) order.  \\

%% file: design-online-sequencing.tex
\section{Online Sequencing}
\label{sec:online}
\label{sub:stable}

% \greencheck\notepanda{Proposed mod to the next two sentences: `
% Our discussion so far has assumed that the sequencer needs to order a set of available events. However, in reality, the sequencer needs to operate online, ordering events as they arrive as a stream.'}
Our discussion so far has assumed that the sequencer needs to order a set of available events. However, in reality, the sequencer needs to operate online, ordering events as they arrive as a stream. For \emph{online sequencing}, we need a way to answer the following question: should the sequencer order and emit the current events (to any downstream application) or wait for further events? The waiting would be needed if a new arriving event has a lower rank than the already present events.
% \greencheck\notepanda{Saying this as an arriving message might need to be ordered before the current message is a bit easier for me to visualize: ranks are at a remove. But this could be fixed by having the notion of ranking as the initial problem statement in the paper.} 
In such scenarios, skipping the wait may lead to a sequencer output that disagrees with the fair order of events.
% \greencheck\notepanda{This statement is strange: the `fair ordering' is not what is incorrect, rather the order output by the sequencer disagrees with the `fair order'.}

Once the sequencer knows all new events will only need a higher rank than the already arrived events, it can order the existing events and deem it stable for emitting from the sequencer. In the following, we discuss how this stability condition can be checked. 
% \greencheck\notepanda{This next statement is strange: efficiency can mean all sorts of thing, but latency is usually not an efficiency thing, resource requirements are.}

\SetKwProg{Fn}{Function}{:}{end}
\begin{algorithm}[t]
\fontsize{10pt}{10pt}\selectfont
\caption{Online Sequencing}
\label{alg:online-stable-time}

\KwIn{Number of clients $N$;}

\BlankLine
\textbf{State:}
$\mathsf{buffer} \gets \emptyset$ (received but unordered events)\;
$\mathsf{watermark}[c] \gets -\infty$ for all $c \in \{0,\ldots,N-1\}$\;

\BlankLine
\Fn{\textsc{OnEventArrival}$(e)$}{
  % \tcp{$e$ has fields: $e.\mathsf{id}$, $e.\mathsf{client}$, $e.t'$ (local time)}
  $\mathsf{buffer} \gets \mathsf{buffer} \cup \{e\}$\;
  \textsc{AttemptOrder}()\;
}

\BlankLine
\Fn{\textsc{OnHeartbeat}$(c, t)$}{
  % \tcp{Heartbeat means client $c$ has sent all events with local time $\le t$, clients generate heartbeats periodically}
  $\mathsf{watermark}[c] \gets \max(\mathsf{watermark}[c], t)$\;
  \textsc{AttemptOrder}()\;
}

\BlankLine
\Fn{\textsc{AttemptOrder}()}{
  \If{$\mathsf{buffer} = \emptyset$}{
    \Return\;
  }

  $\mathsf{batches} \gets \textsc{OrderBuffer}(\mathsf{buffer})$\;
  \tcp{$\mathsf{batches}$ is a list of batches/SCCs generated by Algorithm~\ref{alg:find-order}}

  $(t_F, x_F) \gets \textsc{Frontier}(\mathsf{batches})$\;
  \tcp{Choose frontier event in last batch with maximum local time $t_F$ and its client $x_F$}

  $\mathsf{stable\_until}[\cdot] \gets \textsc{StableTimes}(t_F, x_F)$\;

  \For{$c \gets 0$ \KwTo $N-1$}{
    \If{$\mathsf{watermark}[c] < \mathsf{stable\_until}[c]$}{
      \Return \tcp*{Not safe yet; wait for more events/heartbeats}
    }
  }

  \textsc{Emit}$(\mathsf{batches})$\  \tcp*{Safe to commit}
  $\mathsf{buffer} \gets \emptyset$\;
}

% \BlankLine
% \Fn{\textsc{Frontier}$(\mathsf{batches})$}{
%   \tcp{Return $(t_F, x_F)$ from the last batch}
%   $B \gets \mathsf{batches}[\lvert \mathsf{batches}\rvert]$\;
%   $(t_F, x_F) \gets \arg\max_{e \in B} e.t'$ \tcp*{also return $e.\mathsf{client}$}
%   \Return $(t_F, x_F)$\;
% }

\BlankLine
\Fn{\textsc{StableTimes}$(t_x, x)$}{
  \tcp{For each client $c$, find the smallest $t$ such that $\Pr(t_x + c_x < t + c_c) \approx 1$}
  \tcp{(via per-client binary search on future timestamp t)}
  \Return $\mathsf{stable\_until}[\cdot]$\;
}

% \BlankLine
% \Fn{\textsc{OrderBuffer}$(\mathsf{buffer})$}{
%   \tcp{Abstract ordering procedure over current buffer (e.g., SCC+topo on pairwise probabilities)}
%   \Return $\mathsf{batches}$\;
% }
% \vspace{-0.7cm}
\end{algorithm}

% \Para{Stable ordering:}

% As messages arrive at the sequencer, it must determine when it is safe to assign final ranks i.e., when, with high probability, no future message could require a rank equal to or lower than those already assigned. Once this \emph{stability} condition holds for a set of messages, the sequencer may proceed to order them and emit the resulting ordered messages to downstream applications.

% \todo{Add equation from HotNets here}

Consider an event $m$ from client $c$ generated at timestamp $t$. For this event, the sequencer computes a \emph{stable timestamp} for every other client. A stable timestamp $t'$ for client $c'$ implies that any event $m'$ generated by $c'$ at or after $t'$ has a very high probability of being assigned a rank strictly higher than that of $m$. Therefore, if the sequencer has received all events with timestamps smaller than the respective clients' stable timestamps, it can safely order and emit the events.

To enable this determination, each client periodically sends heartbeat events to the sequencer. A heartbeat generated at timestamp $t$ and received by the sequencer guarantees that all events generated by the sending client at timestamps $t' < t$ have already been received by the sequencer.

Using heartbeats, the sequencer can check whether all events with timestamps earlier than the computed stable timestamps have arrived. If additional events with timestamps smaller than the current stable timestamps arrive, the sequencer must re-compute the stable timestamps accounting for the new events. Algorithm~\ref{alg:online-stable-time} implements the online sequencing procedure that enforces the stability condition.
% \greencheck\notepanda{I am surprised there is not just a timeout. If a client does not send a timeout for some time, you could just assume there are no more messages with earlier timestamp. Generally, I think it would be better to say that you error out messages that arrive too late (equivalently, if a client has particularly large error), rather than saying there is no liveness.}\haseeb{The timeout is a good idea but I am trying to not add any more parameters as each parameter demands some corresponding experiments}

We show in \S\ref{sec:evals} that with large clock correction distributions, it may take longer to reach the stability condition which, in turn, delays emitting ordered events from the sequencer. Hence, any production realization of \systemname{} may add sanity checks around unusually high clock sync. errors that lead to large correction distributions and may bar the corresponding clients from partaking in the sequencing for the sake of liveness. 

% \todo{discuss how stable times calculation avoid leading to a case: t from c has probability=1 to be after current event's t'. but t is part of a cycle and some other event will loop bacl to t' because of intransitivity}

% Any clients which may have unusually high clock sync. errors and in turn, large tails in their correction distributions can be ignored until the corresponding circumstances improve. 

% \Para{Liveness: (TODO)} EMIT may never reach in the above algorithm depending on the message generation pattern and the clock correction distributions. 

% To release a set of ordered events, the sequencer needs to ensure no new events need lower or equal rank. A sequencer waits for one message/heartbeat from each client. Orders them and finds out (find if the probability is low enough for ) whether it needs to wait for new messages/heartbeats before realsing the batch.  \\

% Replacing Algorithm~\ref{alg:pairwise-points} with the combination of Algorithm~\ref{alg:precompute-diffs} and Algorithm~\ref{alg:pairwise-query} yields substantial performance improvements.

% Events arrive in a stream, finding pairwise probability of N incoming events can be time consuming.  \\

% We propose a solution noting that distributions stay stable over some time.  We precompute some stuff so that finding pairwise probability of two incomign events is cheap.  \\

%% file: evaluation.tex
\section{Evaluation}
\label{sec:evals}

For evaluating \systemname{}, we require ground truth, namely the true temporal order in which events occur, against which we can compare the order produced by \systemname{}. However, obtaining ground truth in real-world systems is fundamentally challenging, as it would require perfectly synchronized clocks, which are impossible to realize.

\Para{Difficulty of obtaining ground-truth:} One might attempt to approximate ground truth by using clocks that are significantly better synchronized than those available to \systemname{}. However, this approach is insufficient on its own. Even if two clocks with different synchronization accuracies are available, the variance (across clients) in latency between reading the two clocks can introduce additional uncertainty, thereby obscuring the true event order. Only in an idealized setting where both clocks could be accessed simultaneously would it be possible to treat one clock as ground truth while using the other for \systemname{}. 

As a concrete example, consider two clients, A and B, each reading two clocks in sequence. Client A first reads \texttt{clock1} at time $t_1$ and then reads \texttt{clock2} at time $t_4$. Client B reads \texttt{clock1} at time $t_2$ followed by \texttt{clock2} at time $t_3$. Assume that \texttt{clock1} timestamps the event occurrence, while \texttt{clock2} is a perfectly synchronized reference clock, and that the observed times satisfy $t_1 < t_2 < t_3 < t_4$.

Under \texttt{clock1}, the event at client A precedes the event at client B. However, under the perfectly synchronized \texttt{clock2}, the event at client B precedes the event at client A. This discrepancy arises from asymmetric clock-access latencies, showing how even a perfectly synchronized reference clock may fail to provide ground truth in practice. This is problematic for \systemname{} evaluation as most of its benefit becomes apparent when inter-event-occurrence duration is small e.g., a few microseconds. But a few microseconds latency of variance in accessing two clocks can render the ground truth useless. Consequently, establishing reliable ground truth for event ordering remains non-trivial, which motivates our design of simulation-based methodology described below.

\Para{Simulator-based setup:} We use the ns-3 simulator to instantiate multiple clients and a server acting as the sequencer. Event generation at clients is explicitly controlled, allowing us to prescribe a ground-truth order of events. For example, an input sequence $c^1_i, c^2_j, c^3_k$ specifies that client $i$ generates the first event, followed by client $j$, and then client $k$. The inter-event durations are configurable.

When an event occurs at a client, the client records a local timestamp using a simulated virtual clock with configurable offset and drift. These timestamps are provided as inputs to the sequencer, which produces an ordered sequence of events. We evaluate \systemname{} by comparing the sequencer’s output order against the prescribed input sequence.

Each client and the server has a simulated virtual clock. The offsets and drifts for each clock are selected from a normal distribution whose parameters are configurable. Clients run NTP protocol to synchronize their clocks with the server's clock which is assumed to be the global time. The clock sync. accuracy depends on the latency experienced by the packets containing the synchronization probes. To make the synchronization realistic, we configure the latency distribution to what is expected in a real data center. 

\Para{Realistic latency distribution:} In a separate ns-3 simulation, we simulate a fat-tree data center topology from Homa~\cite{homa, homa_ns3}: 144 hosts, 9 ToRs, 4 spines where each ToR downlink is connected to mutually exclusive set of 16 hosts and each ToR's uplink is connected all spines. ToR downlinks' have a capacity of 10Gbps each where the uplinks' have a capacity of 40 Gbps each. More details about the workload in Appendix~\ref{app:dcn_setup}. 
% We set the workload ``DCTCP workload'' from~\cite{homa, homa_ns3}. 
We gather latencies experienced by the packets in this simulation. These latencies are used in our \systemname{} simulation so that clock sync. probes can experience near-realistic conditions. 
% In a separate ns-3 simulation (Appendix~\ref{app:datacenter-simulation}), we simulate a data-center workload to get realistic latencies which we feed to \systemname{} simulation. 

\Para{Baseline:} We develop a baseline to compare against \systemname{}. It is inspired by Spanner's TrueTime API~\cite{spanner}. Each event timestamp is represented with an uncertainty interval around it i.e., for a timestamp $t$, we represent it as $[t - Error Bound,\, t + Error Bound]$. In the paper~\cite{spanner}, the $Error Bound$ is set to be a limit that the clock error generally will not cross and is highly dependent on the production environment. We infer the bound from the gathered clock sync. probes. We find the standard deviation, $\sigma$, from the gathered probes and set the bound to be $3\sigma$. Once all event timestamps are represented as uncertainty intervals, we sort the intervals based on the starting point of intervals. The sorted intervals directly represent the order of the corresponding events while assigning the same order/rank to events if their intervals overlap. The above mechanism helps finding the order of a given set of events, while the online extension uses the same protocol presented in Algorithm~\ref{alg:online-stable-time}. 

\Para{Configuration:} We use the following configuration unless specified otherwise. We have 100 clients with IDs: $c_1, c_2, ..., c_{100}$, generating events in the order of their client ID. Total 200 events are generated per experiment, while we perform 5 experiments and average out the fairness. The clock sync probes' frequency is set to 10 probes per second. The offset for each client's clock is taken from  normal distribution with mean of 0 and standard deviation of \SI{1}{\second}. 

\Para{Evaluation questions:} We focus on the following aspects: 
\begin{enumerate}[labelwidth=!,nosep,label=(\arabic*),leftmargin=*]

\item How does fairness achieved by \systemname{} compare to that achieved by TrueTime-baseline?
\item How do the different characteristics of an application (e.g., inter-messages delays, number of messages/clients etc.) impact the fairness achieved by \systemname{}?
\item How is fairness impacted by the network latencies?
\item How does clock sync protocol affect the fairness?
\item Dive into several components of \systemname{}.
% \item Dive into evaluation setup. \todo{if time and space allows for it}
% \item Comparison against fairness mechanisms employed by financial exchanges.
% \item Other than the above, we accompany the evaluation with various sanity check of the simulation and deep dive into the setup used for evaluation?

\end{enumerate}

\Para{Comparison Metric: Rank Agreement Score (RAS)} is proposed to quantify fairness given a ground truth ordered sequence and an ordered sequence output by \systemname{} or the baseline. RAS is computed as follows: for each correctly ordered pair of events, score increases by 1, for each incorrectly ordered pair, the score decreases by 1 while a pair of events left unordered w.r.t. each other adds 0 to the score. The resultant score is normalized by the total number of pairs of events. While this metric is helpful in most scenarios, it has some pitfalls we elaborate further while explaining the evaluation of \systemname{}. Higher RAS is better. 

% \begin{figure}
%     \centering
%     \includegraphics[width=1\linewidth]{figures/fairness_plot_sweep_intermessagesdelay.pdf}
%     \caption{\systemname{} usually achieves better fairness than the TrueTime baseline.}
%     \label{fig:fairness_sweep_delay}
% \end{figure}

\begin{figure}[!t]
    \centering

    \hfill
    \begin{minipage}[t]{0.48\linewidth}
        \centering
        \includegraphics[width=1\linewidth]{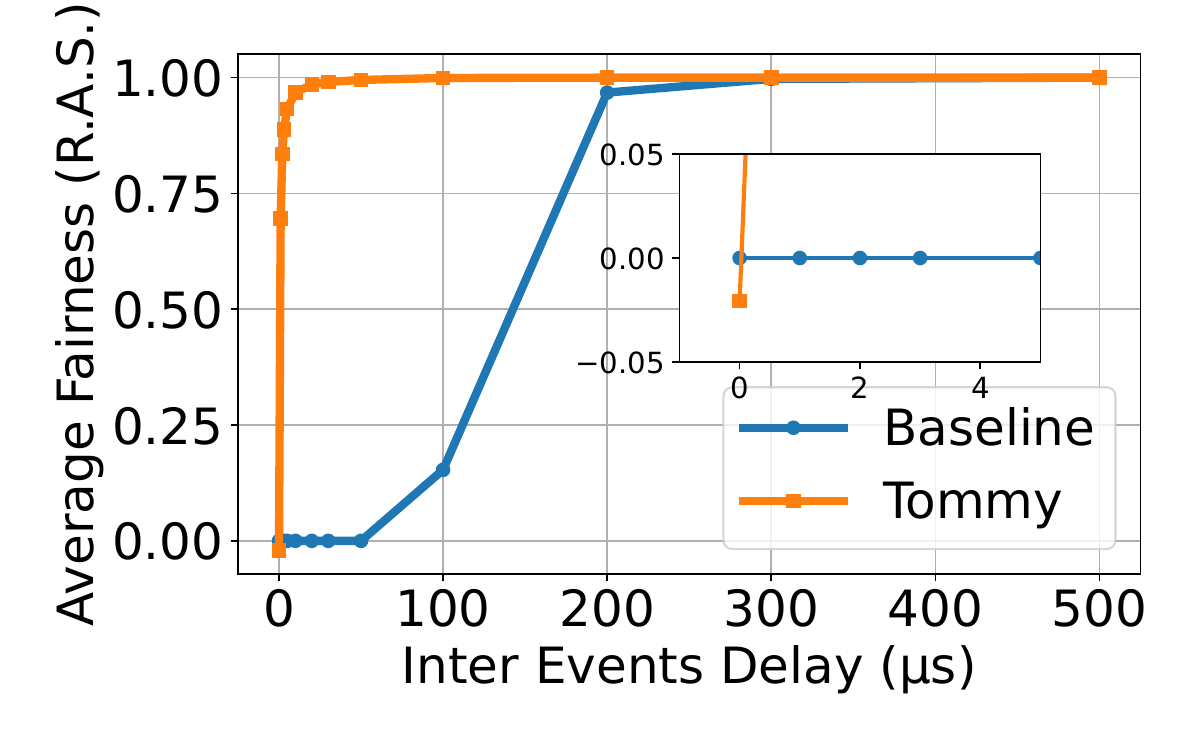}
        % \vspace{-0.7cm}
        \caption{\textmd{\systemname{} mostly achieves better fairness than the baseline.}}
        \label{fig:fairness_sweep_delay}
        \end{minipage}
    \hfill
    \begin{minipage}[t]{0.48\linewidth}
        \centering
        \includegraphics[width=1\linewidth]{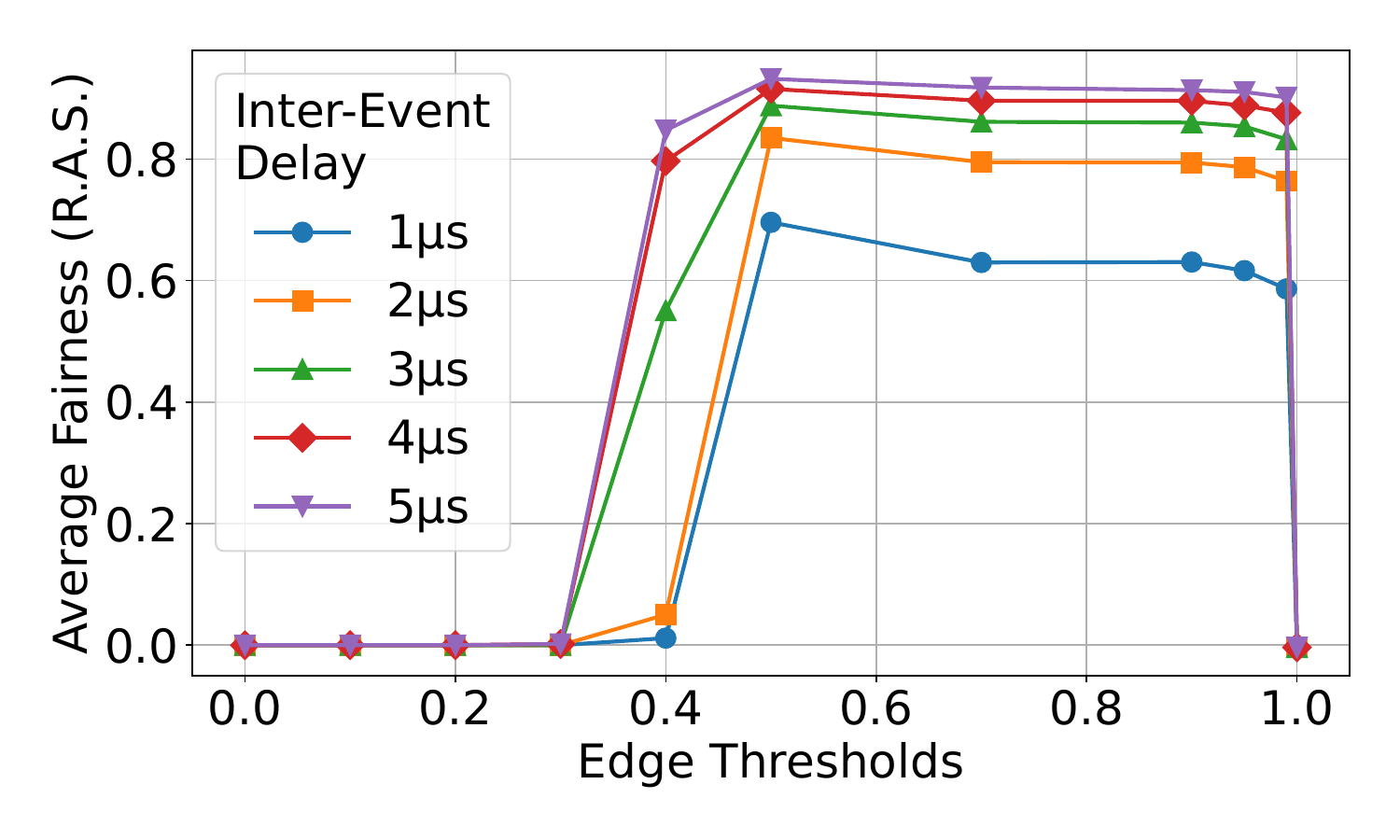}
        % \vspace{-0.7cm}
        \caption{\textmd{Optimal point for fairness arrives at edge threshold of 0.5.}}
        \label{fig:fairness_sweep_messages_edge_thresh}
    \end{minipage}
    \vspace{-0.7cm}
\end{figure}

\begin{figure*}
    \centering

    \begin{minipage}[t]{0.32\linewidth}
        \centering
        \includegraphics[width=\linewidth]{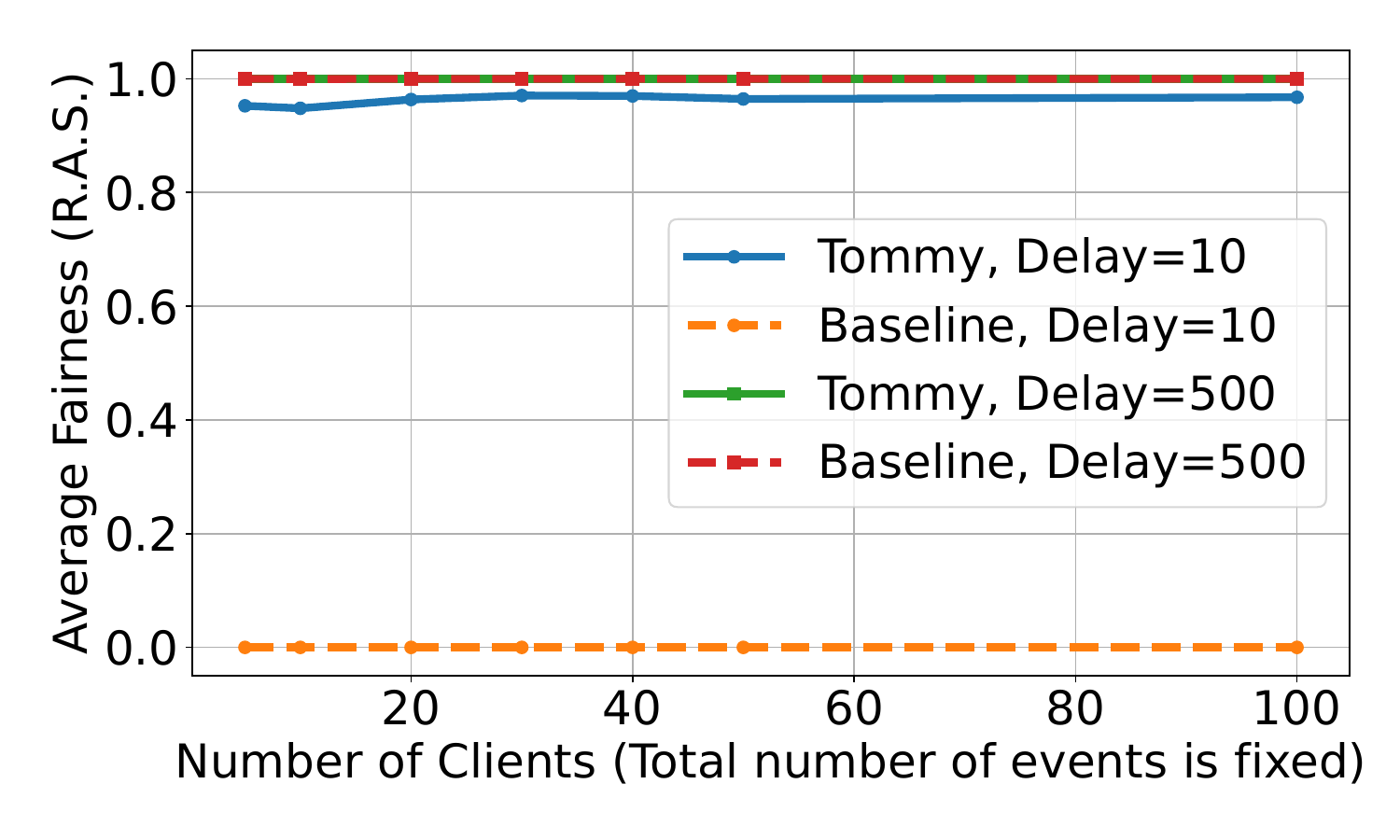}
        \vspace{-0.7cm}
        \captionof{figure}{\textmd{Number of clients (with no change in total events) have no effect on fairness.}}
        \label{fig:fairness_sweep_clients}
    \end{minipage}
    \hfill
    \begin{minipage}[t]{0.32\linewidth}
        \centering
        \includegraphics[width=\linewidth]{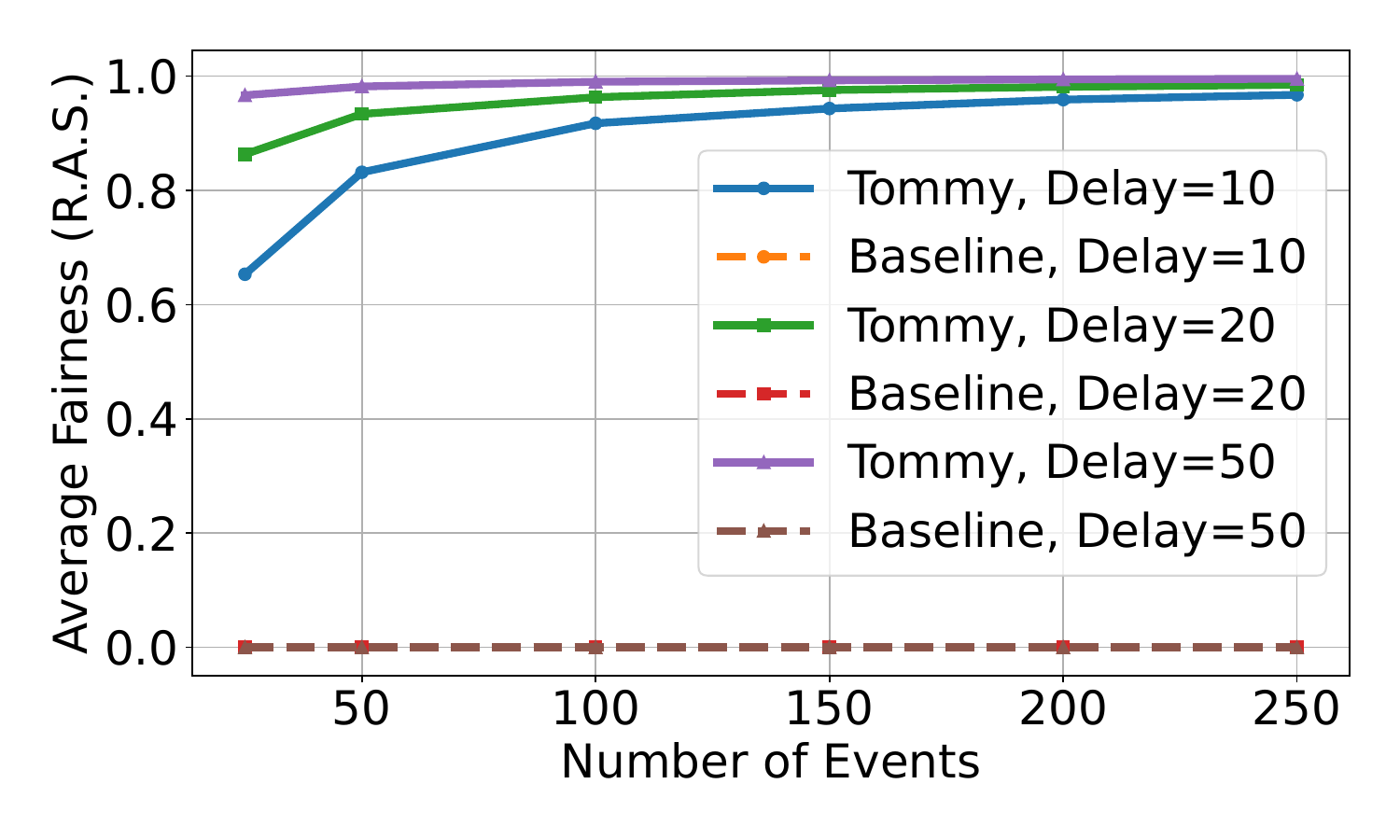}
        \vspace{-0.7cm}
        \captionof{figure}{\textmd{Fairness with \systemname{} is higher if ordering more events, an artifact of the fairness metric.}}
        \label{fig:fairness_sweep_messages}
    \end{minipage}
    \hfill
    \begin{minipage}[t]{0.32\linewidth}
        \centering
        \includegraphics[width=\linewidth]{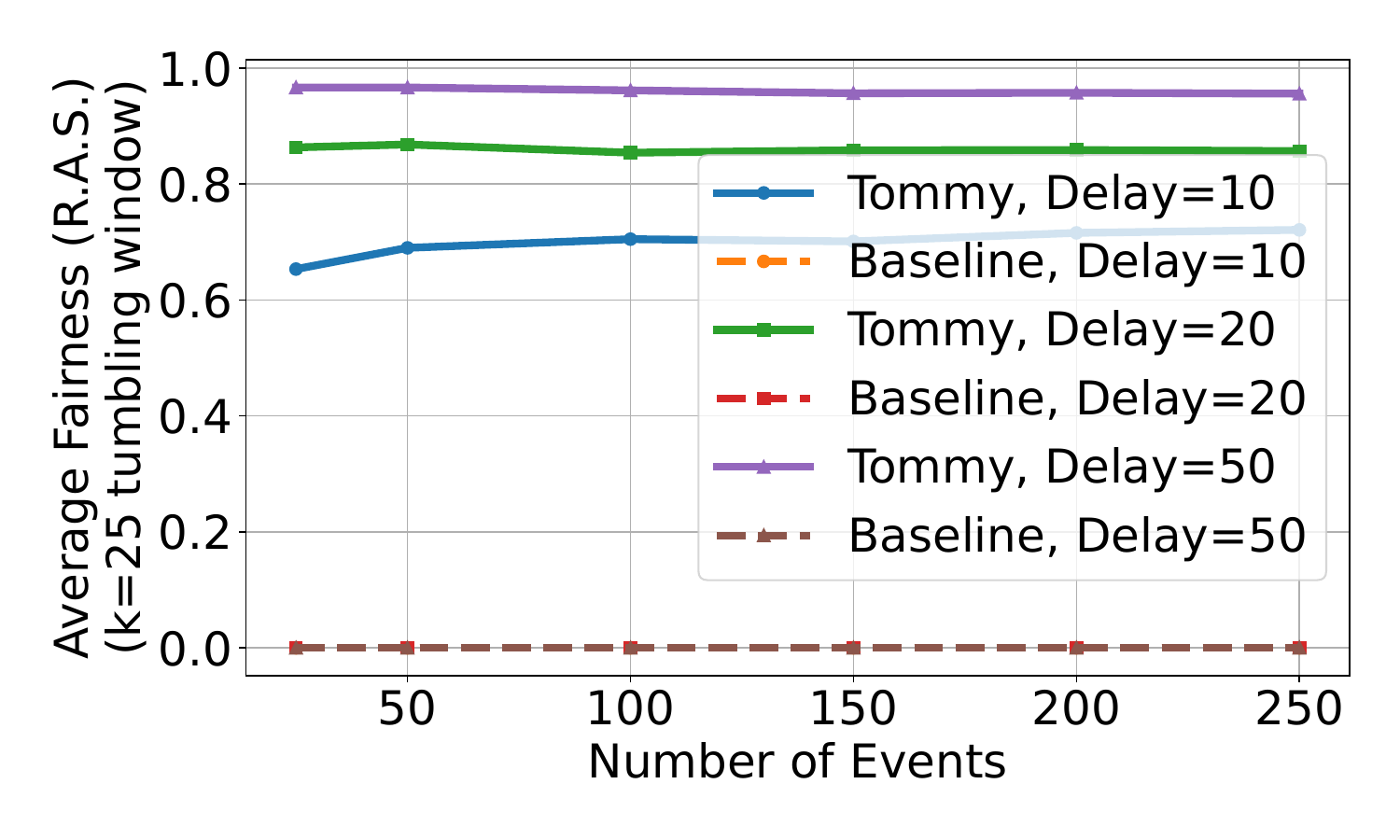}
        \vspace{-0.7cm}
        \captionof{figure}{\textmd{Fairness (over a constant sized window of events) stays constant as total number of events increases.}}
        \label{fig:fairness_sweep_messages_constant_step}
    \end{minipage}
\vspace{-0.2cm}
\end{figure*}

\subsection{\systemname{} vs. TrueTime Baseline}
\label{subsec:tommy_vs_baseline}

To show effectiveness of \systemname{} in fairly ordering events, we compare it against the baseline discussed above. 

Figure~\ref{fig:fairness_sweep_delay} shows \systemname{} mostly achieves higher fairness compared to the baseline as we sweep the inter-event delay from 0 to \SI{500}{\micro\second}. At higher delays, ordering the events is easier as the uncertainty by clocks is smaller compared to the gap between event timestamps. Both \systemname{} and the TrueTime baseline achieves perfect fairness with delay=\SI{500}{\micro\second}. Towards lower delays, \systemname{} establishes its effectiveness over the baseline. However, the baseline is a deterministic mechanism and may leave events unordered if it is uncertain achieving fairness closer or equal to 0. On the other hand, \systemname{}, as a probabilistic mechanism, may attempt to order events even under high uncertainty and make mistakes, leading to negative fairness in some scenarios; baseline's fairness never goes below 0. That's the reason for baseline achieving higher fairness than \systemname{} in Figure~\ref{fig:fairness_sweep_delay} with inter-event delay equal to 0. As clock's errors increase and the inter-event delay decreases, \systemname{} may make more ordering mistakes than the baseline, but such scenarios would be out of the operational regions for the most systems. We do one experiment with pathological configuration i.e., setting clock drift rate to be extremely high (100 ppm) and see that \systemname{} can achieve $\approx 18\%$ lower fairness (RAS) than the baseline. 

% In the subsequent sections, it is reinforced that mostly \systemname{} performs better than the baseline. 

\subsection{Properties Of An Application And The Impact on Fairness}

Several properties of an application e.g., how many clients are involved, how many events occur per second, the inter-event delay can have impact on the fairness achieved by \systemname{}. We already show the impact of inter-event delay on the fairness in \S\ref{subsec:tommy_vs_baseline}. Here, we discuss the impact of number of clients and events, as well as a pitfall of the RAS metric. 

\Para{Changing the number of clients:} For a fixed number of events, evenly distributed among clients, changing the number of clients has no impact on RAS in our simulations. Figure~\ref{fig:fairness_sweep_clients} shows how the RAS remains almost constant as we sweep the number of clients. The effect stays consistent regardless of the inter-event delays (\SI{10}{\micro\second} and \SI{500}{\micro\second} are shown in the figure). In these experiments, the extra knowledge, that the events occurring at the same client can be ordered using the happened-before relation, is not utilized as otherwise the ordering may improve as the number of clients is made small and the number of events stays constant. 

\Para{Changing the number of events:} We fix the number of clients to 25 and sweep the number of events from 25 to 250. Figure~\ref{fig:fairness_sweep_messages} shows that \systemname{} fairness improves as the number of events increase. This is a surprising result and upon further investigation, we realize it is an artifact of our RAS metric. As the number of events increase, a later event is ordered correctly with a large number of older events as the inter-event delay becomes high. Due to this, the number of correctly ordered pairs increases, so RAS increases. To isolate fairness from this effect, we only measure fairness for events that occur close together, i.e., we use a window (=25 events) and compute RAS for events only in that window. Figure~\ref{fig:fairness_sweep_messages_constant_step} shows that increasing the number of events has no significant impact on the fairness computed over a constant sized window, which is an expected behavior.  

% Fairness is not directly related to the number of clients/events, but it depends on the inter-event delay as shown previously. If changing the clients/events leads to the changes of the inter-event delay, then the impact on fairness would show up. However, while running the experiments above,  we have controlled for the inter-event delay regardless of the number of clients/events.  

Fairness is not directly related to the number of clients or events; rather, it depends on the inter-event delay, as shown previously. If changing the number of clients/events alters inter-event delay, the resulting impact on fairness will manifest. However, in the experiments above, we control for the inter-event delay, regardless of the number of clients/events.

\subsection{Network Latencies Impact on Fairness}
\label{subsec:evals_network_lats}

\begin{figure}[!t]
    \centering

    \begin{minipage}[t]{0.48\linewidth}
        \centering
        \includegraphics[width=\linewidth]{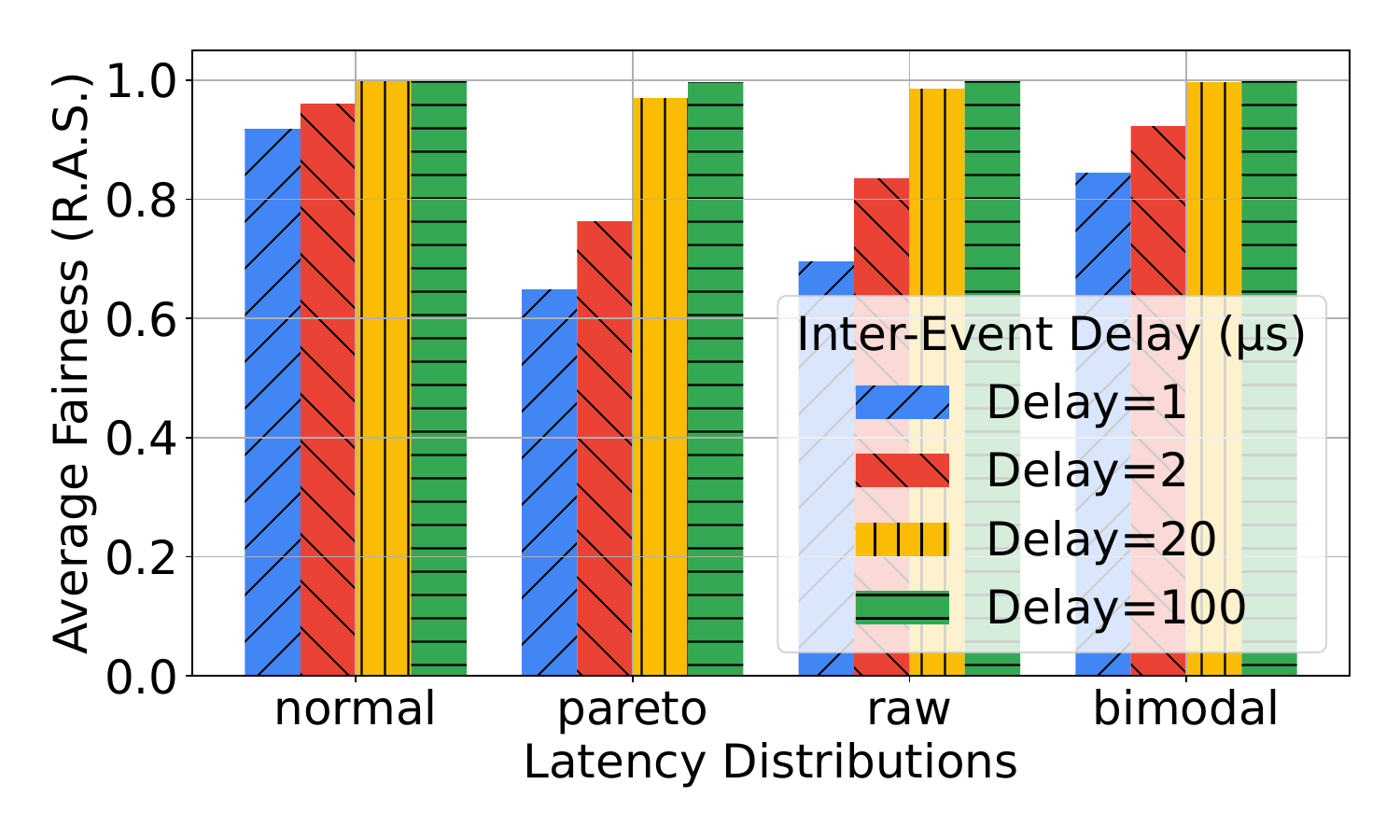}
        \captionof{figure}{\textmd{Latency models  impact fairness.}}
        \label{fig:fairness_sweep_latency_models}
    \end{minipage}
    \hfill
    \begin{minipage}[t]{0.48\linewidth}
        \centering
        \includegraphics[width=\linewidth]{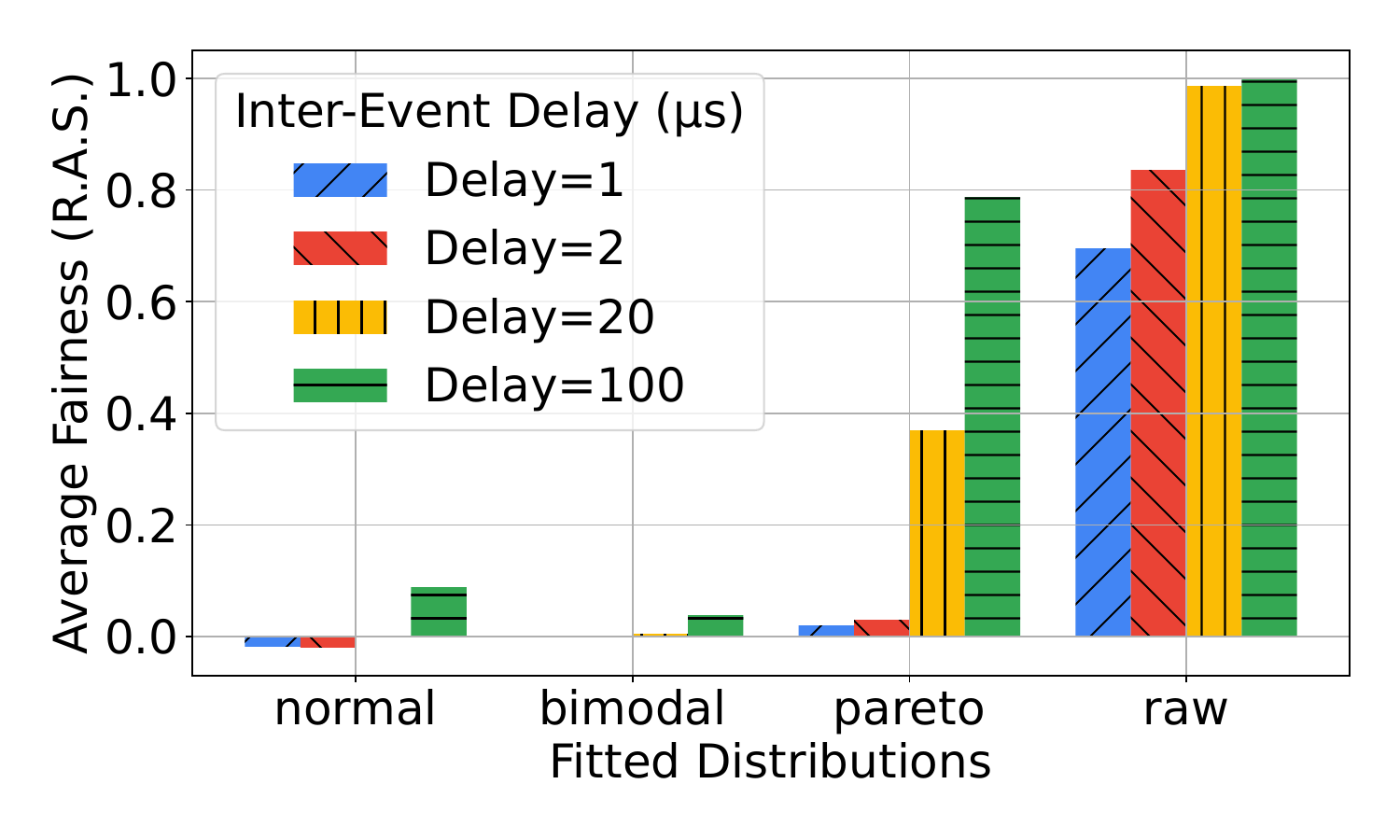}
        \captionof{figure}{\textmd{Fitting distributions not better.}}
        \label{fig:fairness_sweep_fit_dists}
    \end{minipage}
\vspace{-0.8cm}
\end{figure}

Network latencies can impact the clock synchronization probes as well as the messages that convey the event timestamps to the sequencer. Sequencer ensures that latencies of messages carrying the event timestamps have no effect on fairness as the sequencer only progresses once it has seen certain messages from \emph{every} client so the fairness calculations are idempotent w.r.t how much time certain messages took to arrive. 
% In the worst case, sequencer will halt and not make progress if certain messages take too long to arrive, but that is a concern we do not indulge with and should be explored further during a production realization of \systemname{}. 

The impact of network latencies on clock synchronization probes, however, is noticeable as it dictates the clock synchronization accuracy and the clock correction distributions. First, for a sanity check, we feed several random latency distributions to our ns-3 \systemname{} simulation and study the impact on fairness. Figure~\ref{fig:fairness_sweep_latency_models} shows the fairness varies across multiple latency distributions. Second, as we collect latencies of packets from the fat-tree data center simulation, we fit a distribution to that data and then feed the learnt distribution to our \systemname{} simulation. Figure~\ref{fig:fairness_sweep_fit_dists} presents the fairness achieved by not fitting any distribution and using the obtained histogram of latencies as-is achieves the highest fairness compared to fitting a normal, bimodal or even a pareto distribution. This establishes the sensitivity of \systemname{} to the latencies experienced by synchronization probes. 

\subsection{Effect of Synchronization Probes' Frequency}
\label{subsec:evals_probes_freq}

\begin{figure}[t]
    \centering

    \begin{minipage}[t]{0.48\linewidth}
        \centering
        \includegraphics[width=\linewidth]{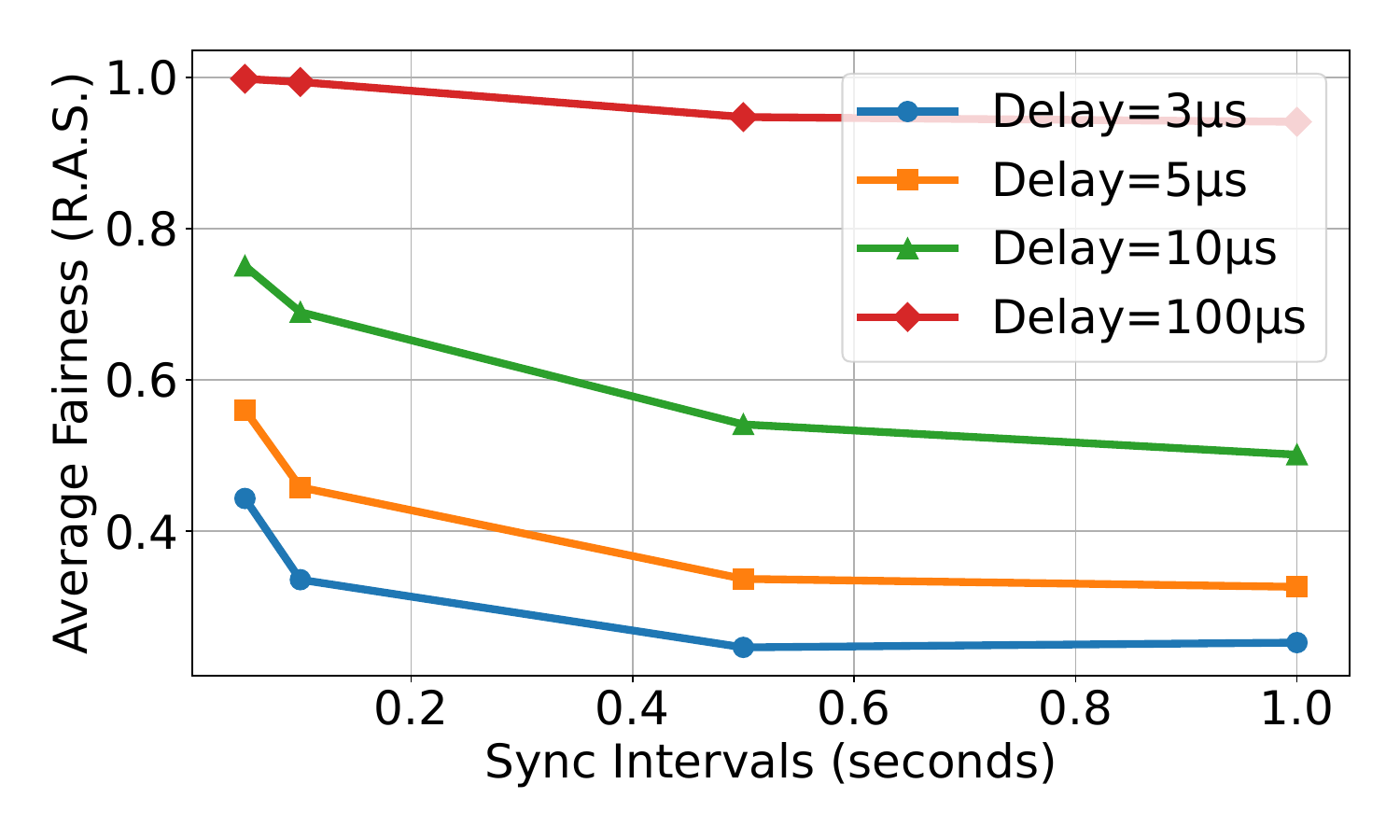}
        \vspace{-0.5cm}
        \captionof{figure}{\textmd{Clock sync. interval impact fairness.}}
        \label{fig:fairness_sweep_syncintervals}
        \vspace{-0.5cm}
    \end{minipage}
    \hfill
    \begin{minipage}[t]{0.48\linewidth}
        \centering
        \includegraphics[width=\linewidth]{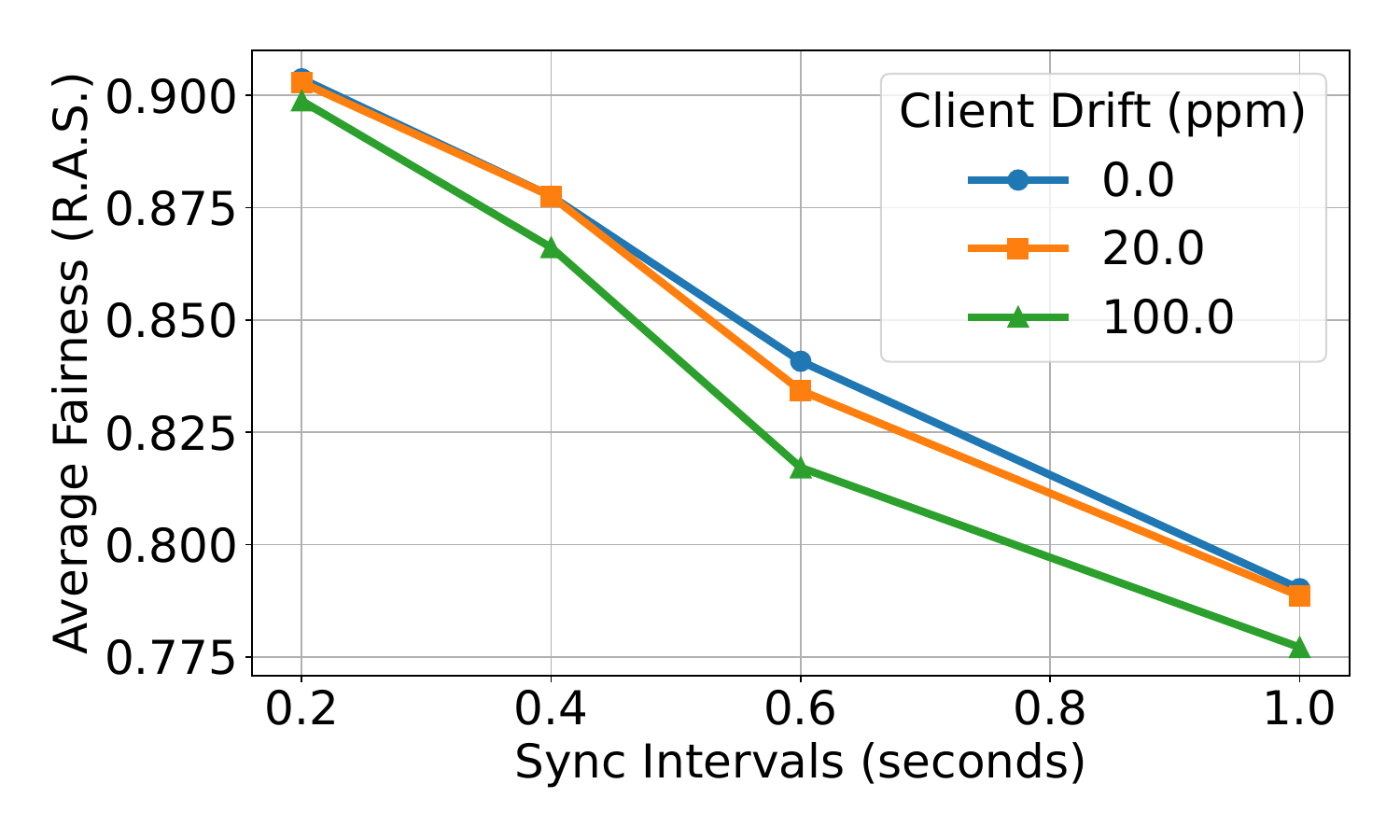}
        \vspace{-0.5cm}
        \captionof{figure}{\textmd{Higher drift lowers the fairness.}}
        
        \label{fig:fairness_sweep_syncintervals_drifts}
        \vspace{-0.5cm}
    \end{minipage}

% \vspace{-0.5cm}
\end{figure}

Clock synchronization probes' frequency directly impacts the fidelity of clock correction distributions estimated by clients. These distributions in turn directly impact the fairness achieved by \systemname{}. Figure~\ref{fig:fairness_sweep_syncintervals} shows the length of the clock sync. interval, in our implementation of NTP protocol which synchronizes the virtual clock in our ns-3 simulation, is inversely proportional to the fairness achieved by \systemname{}. 

As one of our assumption of a deployed clock synchronization protocol is that it is adequately addressing the clock drift. Figure~\ref{fig:fairness_sweep_syncintervals_drifts} uses the inter-event delay of \SI{15}{\micro\second}, and it shows how higher drift overall leads to lower fairness, while the synchronization intervals' effect also persists. 
%This experiment utilized inter-event delay of \SI{15}{\micro\second}. 

\subsection{Dive into \systemname{} components}
\label{seubsec:evals_tommy_algos}

\Para{Effect of edge threshold:} In \S\ref{sec:graph}, we only consider an edge from $i$ to $j$ if $i \xrightarrow{p > 0.5}j $. The threshold 0.5 was chosen because it indicates that it is more likely that $i$ happened before $j$ than $j$ before $i$. However, increasing the value of this threshold may provide more confidence in the ordering as we only consider edges which are much more likely to be true. But on the other hand, setting this threshold high bars several pairwise relations providing less freedom for \systemname{} to order the events. Figure~\ref{fig:fairness_sweep_messages_edge_thresh} shows that as the edge threshold increases beyond 0.5, the fairness does not increase, but a slight decrease occurs. Further, setting threshold to lower than 0.5, sharply decreases the achieved fairness as \systemname{} stays indifferent about several events.

% \begin{figure}
%     \centering
%     \includegraphics[width=1\linewidth]{figures/fairness_plot_sweep_edge_thresh.pdf}
%     \caption{Optimal point for fairness arrives at edge threshold of 0.5.}
%     \label{fig:fairness_sweep_messages_edge_thresh}
% \end{figure}

\Para{Stability Condition:} As clock errors increase, the correction distributions widen. Larger distributions in turn lead to larger future timestamps that needs to be considered with a current event for it to be considered stable in an online setting. 
In Figure~\ref{fig:fairness_sweep_stable}, we perform an experiment where we increase the clock error (drift) and study the increase in the future timestamps calculated for stability condition (Algorithm~\ref{alg:online-stable-time}), compared to the case with zero clock drift. Figure~\ref{fig:fairness_sweep_stable} shows that the clock error and the increment in stable timestamps are directly proportional. 

% \begin{figure}
%     \centering
%     \includegraphics[width=1\linewidth]{figures/fairness_plot_sweep_stable.pdf}
%     \caption{Future stable timestamps become larger as error increases}
%     \label{fig:fairness_sweep_stable}
% \end{figure}

\begin{figure}[!t]
    \centering

    \hfill
    \begin{minipage}[t]{0.48\linewidth}
        \centering
        \includegraphics[width=\linewidth]{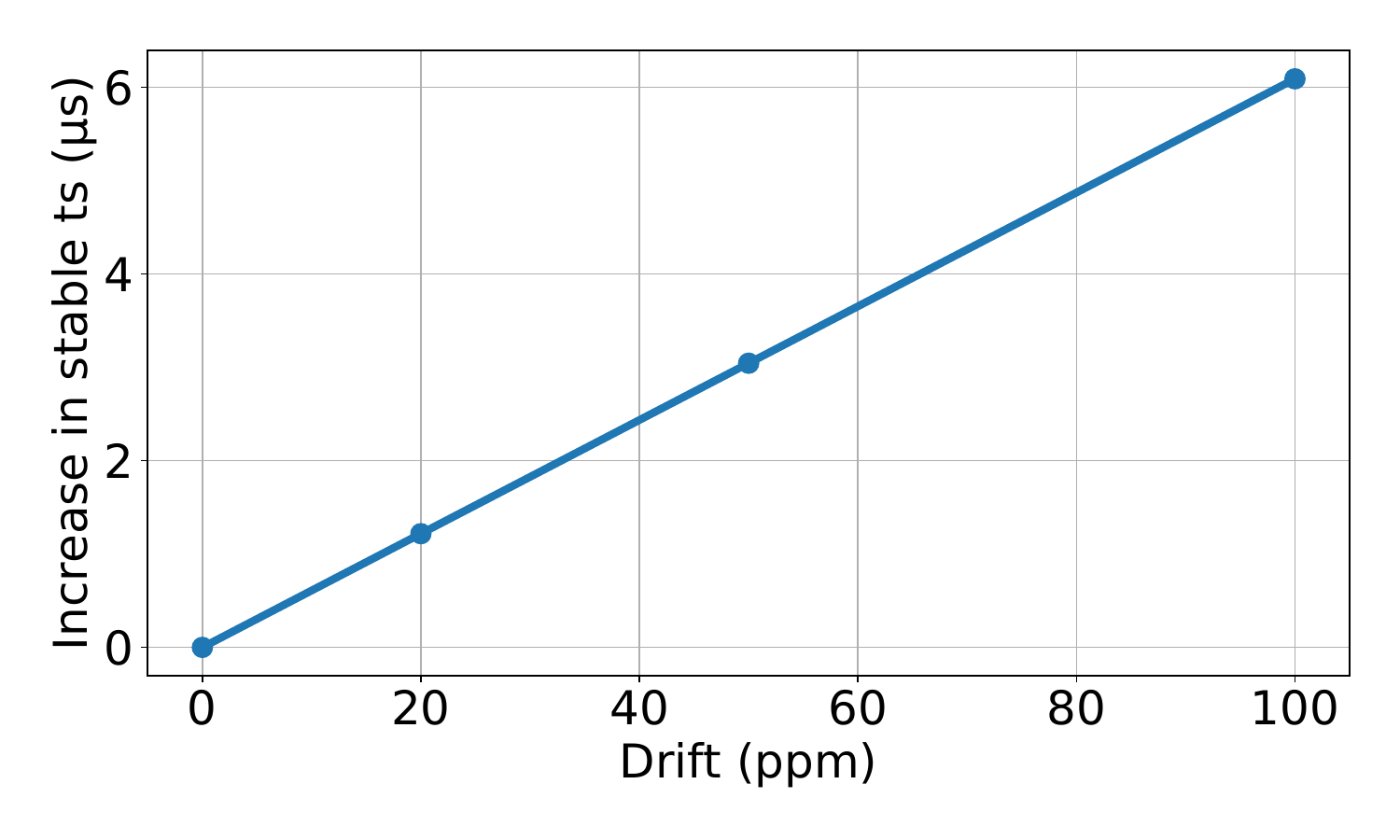}
        \vspace{-0.7cm}
        \captionof{figure}{\textmd{Future stable ts enlarge error grows.}}
        % \vspace{-0.5cm}
        \label{fig:fairness_sweep_stable}
    \end{minipage}
    \hfill
    \begin{minipage}[t]{0.48\linewidth}
    \centering
    \includegraphics[width=1\linewidth]{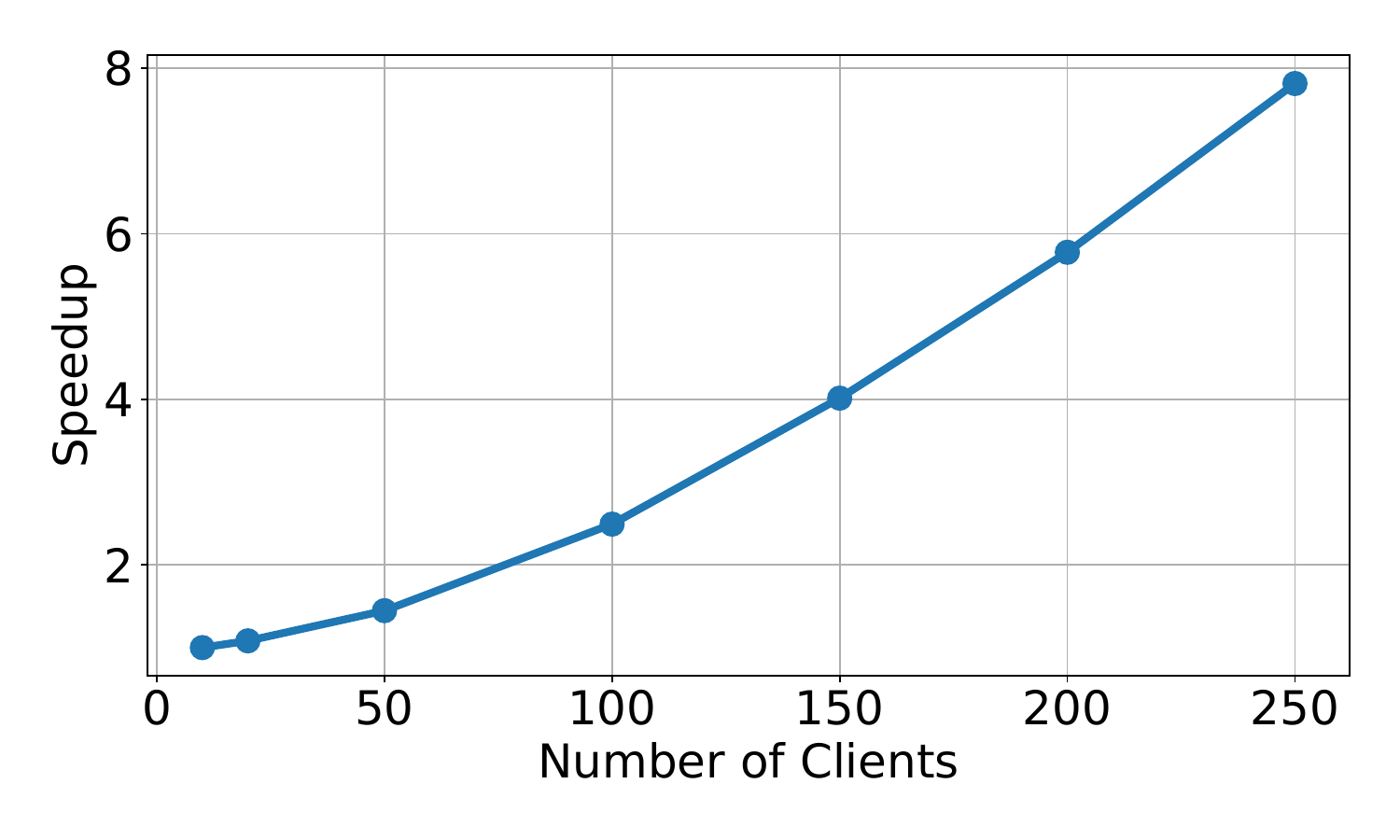}
    \vspace{-0.7cm}
    \caption{\textmd{Pre-computing achieves speed up.}}
    % \vspace{-0.5cm}
    \label{fig:fairness_sweep_precompute}
    \end{minipage}
    \vspace{-0.7cm}

\end{figure}

\Para{Benefit of pre-computing difference distributions:} A main optimization is to pre-compute corrections-difference distributions for each pair of clients as it significantly improves the time it takes finding the pairwise preceding probability for events. Figure~\ref{fig:fairness_sweep_precompute} shows as the number of clients increases, the speedup (=time taken without optimization / time taken with optimization) increases. In this experiment, each client generates 10 events in the same round robin fashion as described in the evaluation setup previously. 

% \begin{figure}
%     \centering
%     \includegraphics[width=1\linewidth]{figures/fairness_plot_sweep_precompute.pdf}
%     \caption{Pre-computing difference distributions achieves significant speed up.}
%     \label{fig:fairness_sweep_precompute}
% \end{figure}

% i will include it later 
% \input{application-hedging}

%% file: limitations.tex
\section{Limitations And Future Work}
\label{sec:limitations}

Tommy has limitations stemming from modeling assumptions and evaluation scope.

\Para{Event definition:} We treat an event as occurring when a client reads its local timestamp. In practice, application level actions may precede this read by variable delays, introducing additional noise that is not captured by our model.

\Para{Modeling assumptions:} Our model assumes independence and short term stability of clock correction distributions. Correlated errors across clients or abrupt network changes may violate these assumptions and degrade ordering accuracy. Addressing such effects requires more complex models and adaptive mechanisms.

\Para{Evaluation scope:} Our results are based on ns-3 simulations. While this enables controlled experiments and access to ground truth, real world deployments may exhibit additional sources of variability that we leave to future work.

%% file: future-work.tex
\Para{Future Work: } There are several promising directions for future work building on Tommy: (i) Strengthening the failure model. Tommy currently assumes fail stop clients. Extending the design to tolerate Byzantine clients would broaden its applicability, especially in open or adversarial environments. (ii) Scaling the sequencer. While this paper focuses on a single logical sequencer, scaling fair ordering to larger number of clients and larger workloads is a natural extension. (iii) A real-world deployment would allow deeper study of clock correction stability, correlation effects, and end to end application level fairness. It would also enable exploration of adaptive mechanisms that dynamically adjust sequencing behavior in response to changing network conditions.

%% file: conclusion.tex
\section{Conclusion}
\label{sec:conclusion}

We present \systemname{}, a sequencer that achieves probabilistic fair ordering of events in the presence of inevitable clock synchronization errors. Instead of attempting to eliminate timestamp uncertainty, \systemname{} embraces it by modeling clock corrections and using probabilistic comparisons to order events fairly. By reducing fair ordering to a ranking problem with intransitive preferences, \systemname{} leverages social choice theory to produce meaningful partial ordering of events. Our evaluation demonstrates that Tommy achieves better fairness than existing baselines, highlighting the feasibility of fair sequencing without relying on tightly synchronized clocks. This work does not raise any ethical concerns.

%% file: appendix.tex
\newpage
\appendix
\section{Transivity holds for Gaussian Distributions}
\label{app:guassian}
\begin{proposition} Let $X,Y,Z$ be independent normal random variables
\[
X\sim\mathcal N(\mu_X,\sigma_X^{2}),\qquad
Y\sim\mathcal N(\mu_Y,\sigma_Y^{2}),\qquad
Z\sim\mathcal N(\mu_Z,\sigma_Z^{2}).
\]
Define the preference relation
\[
X \succ Y \;\Longleftrightarrow\; \Pr[X>Y]>\tfrac12.
\]
Then $\succ$ is transitive: if $X\succ Y$ and $Y\succ Z$, we necessarily have $X\succ Z$.
\end{proposition}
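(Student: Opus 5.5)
The plan is to reduce the probabilistic comparison to a comparison of means. Since $X$ and $Y$ are independent normals, the difference $X-Y$ is itself normal:
\[
X-Y\sim\mathcal N\bigl(\mu_X-\mu_Y,\;\sigma_X^2+\sigma_Y^2\bigr).
\]
We then have
\[
\Pr[X>Y]=\Pr[X-Y>0]=\Phi\!\left(\frac{\mu_X-\mu_Y}{\sqrt{\sigma_X^2+\sigma_Y^2}}\right),
\]
where $\Phi$ is the standard normal CDF. This step assumes $\sigma_X^2+\sigma_Y^2>0$; the degenerate point-mass case is handled separately below.

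The key observation is that $\Phi$ is strictly increasing with $\Phi(0)=\tfrac12$, and the denominator is positive. Hence
\[
\Pr[X>Y]>\tfrac12 \iff \mu_X>\mu_Y.
\]
In other words, $\succ$ coincides with the strict order on the means.

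Transitivity then follows from that of $>$ on $\mathbb R$. From $X\succ Y$ and $Y\succ Z$ we get $\mu_X>\mu_Y>\mu_Z$, so $\mu_X>\mu_Z$. Applying the same equivalence to the pair $(X,Z)$, which is also independent, gives $X\succ Z$.

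Independence is used only pairwise, to get the variance of each difference. Crucially, the sign of the standardized argument of $\Phi$ does not depend on the variances. This is exactly what fails for non-Gaussian (e.g., multimodal) distributions as in Figure~\ref{fig:intrans_dists}: there, $\Pr[X>Y]$ is not a monotone function of a single scalar location difference.

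The only real obstacle is the degenerate case of zero variances, if the statement is read to allow $\sigma=0$. When $\sigma_X=\sigma_Y=0$, the difference $X-Y$ is the constant $\mu_X-\mu_Y$. Then $\Pr[X>Y]$ equals $1$ if $\mu_X>\mu_Y$ and $0$ otherwise, so the equivalence $\Pr[X>Y]>\tfrac12\iff\mu_X>\mu_Y$ still holds. With the equivalence established in all cases, the argument goes through unchanged.
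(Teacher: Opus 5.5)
Your proof is correct and follows the paper's argument essentially step for step. Both show the difference is normal, write $\Pr[X>Y]$ via $\Phi$, use strict monotonicity of $\Phi$ with $\Phi(0)=\tfrac12$ to reduce $\succ$ to comparing means, and finish with transitivity of $>$ on $\mathbb{R}$. Your separate treatment of the case $\sigma_X^2+\sigma_Y^2=0$ is a small improvement: the paper only remarks that the denominator cannot be negative and tacitly assumes it is nonzero.
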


\begin{proof}
For any two independent Gaussian variables $A\sim\mathcal N(\mu_A,\sigma_A^{2})$ and
$B\sim\mathcal N(\mu_B,\sigma_B^{2})$, the difference $A-B$ is Gaussian with
\[
A-B\;\sim\;\mathcal N\!\bigl(\mu_A-\mu_B,\;\sigma_A^{2}+\sigma_B^{2}\bigr).
\]
Hence
\[
\Pr[A>B]=\Pr[A-B>0]
        =\Phi\!\Bigl(\frac{\mu_A-\mu_B}{\sqrt{\sigma_A^{2}+\sigma_B^{2}}}\Bigr),
\]
where $\Phi$ is the standard–normal CDF. Now:

\begin{equation}\label{eq:start}
    \Pr[A>B]>\tfrac12
    \quad\Longleftrightarrow\quad
    \Phi\!\Bigl(\frac{\mu_A-\mu_B}{\sqrt{\sigma_A^{2}+\sigma_B^{2}}}\Bigr) >\tfrac12.
\end{equation}
% \]

As \(\Phi(0) = \tfrac12\), so:

\[
    \Phi\!\Bigl(\frac{\mu_A-\mu_B}{\sqrt{\sigma_A^{2}+\sigma_B^{2}}}\Bigr) >\Phi(0).
\]

Because $\Phi$ is a strictly increasing function,
\[
\Phi\left( \frac{\mu_A - \mu_B}{\sqrt{\sigma_A^2 + \sigma_B^2}} \right) > \Phi(0)
\quad \Longleftrightarrow \quad
\frac{\mu_A - \mu_B}{\sqrt{\sigma_A^2 + \sigma_B^2}} > 0.
\]

As the denominator $\sqrt{\sigma_A^2 + \sigma_B^2}$ cannot be negative, 

\begin{equation}
    \label{eq:means_matter}
    \Pr[A>B]>\tfrac12
    \quad\Longleftrightarrow\quad
    \mu_A-\mu_B>0
    \quad\Longleftrightarrow\quad
    \mu_A>\mu_B.
\end{equation}

Thus our preference rule depends \emph{only} on the means.

Now, suppose $X\succ Y$ and $Y\succ Z$.  This implies
\[
\mu_X>\mu_Y \quad\text{and}\quad \mu_Y>\mu_Z,
\]
which together give $\mu_X>\mu_Z$ because means (i.e., real numbers) are transitive. Applying eq. \ref{eq:means_matter} to $\mu_X>\mu_Z$, yields
$X\succ Z$.
\end{proof}

\section{Data-center workload setup}
\label{app:dcn_setup}

Each client has a 10Gbps uplink to a ToR. Clients are seeded with a message-size distribution from Homa~\cite{homa}, also shown in Figure~\ref{fig:dctcp-workload}. Clients can be configured to emit a load of certain ratio for example 0.1 denotes 10\% of the uplink capacity is exhausted. TCP Cubic is used. Figure~\ref{fig:load_vs_latency} shows the latencies of packets in this data-center simulation as we configure clients to several load ratios. The simulation ran for 0.05s of data-center workload duration (the time taken by the simulator was much longer) for each load ratio. Figure~\ref{fig:dcn_latencies} shows the histogram of latencies with load ratio of 1.0, showing multiple modes and a long tail. When running \systemname{} simulation with this latency data, the clock correction distributions obtained are shown in Figure~\ref{fig:correction_dists} (for two clients). These correction distributions also exhibit multiple modes. 
% In a separate ns-3 simulation, we simulate a fat-tree data center topology: 144 hosts, 9 ToRs, 4 spines where each ToR downlink is connected to mutually exclusive set of 16 hosts and each each ToR's uplink is connected all spines. ToR downlinks' have a capacity of 10Gbps each where the uplinks' have a capacity of 40 Gbps each. We set the workload ``DCTCP workload'' from~\cite{homa, homa_ns3}. The above is the same setup used by ns-3 simulations of Homa~\cite{homa_ns3}. As we simulate the above workload, we gather latencies experienced by the packets. These latencies are used in our \systemname{} simulation so that clock sync. probes can experience realistic conditions. 

\begin{figure}
    \centering
    \includegraphics[width=0.75\linewidth]{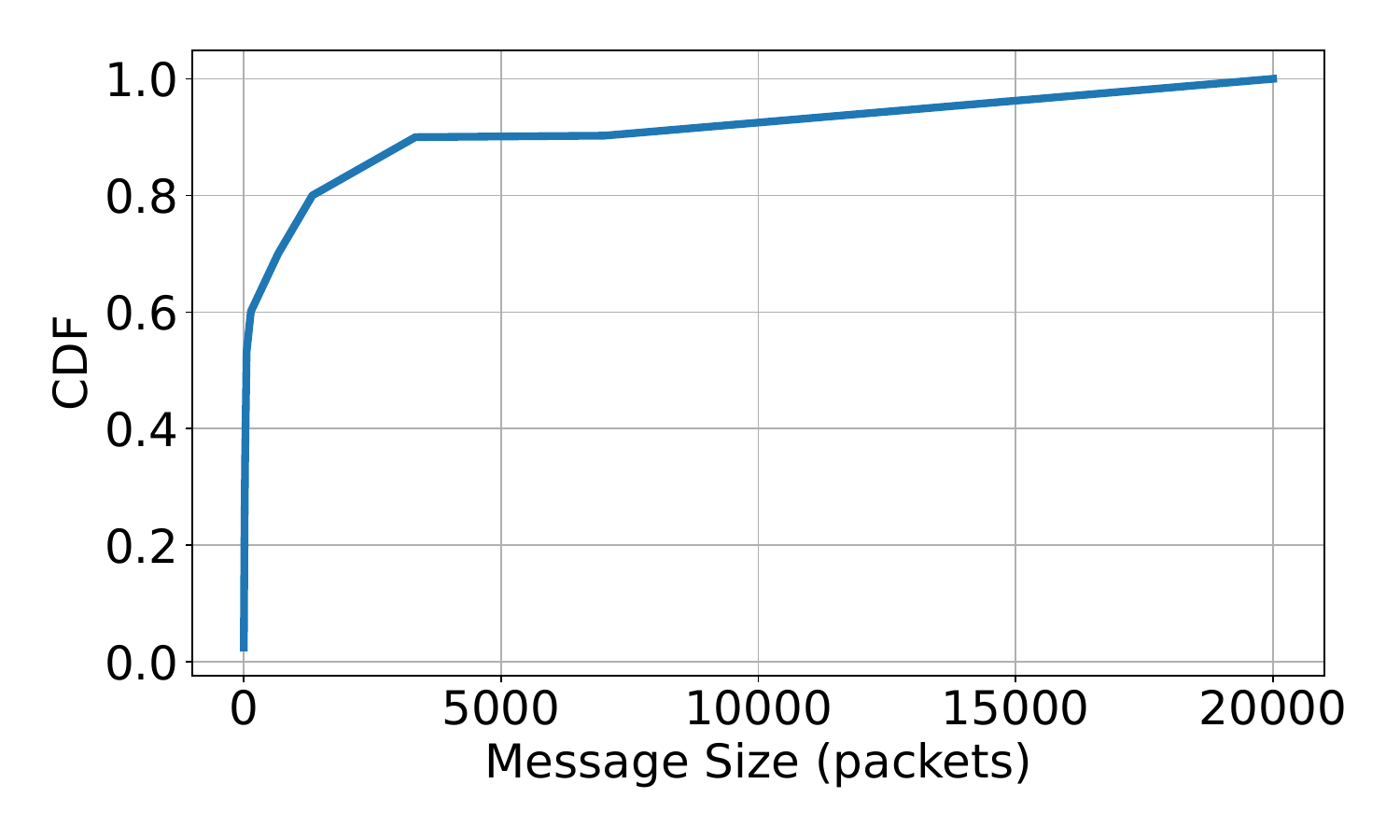}
    \captionof{figure}{\textmd{DCN workload from Homa~\cite{homa, homa_ns3}}}
    \label{fig:dctcp-workload}
\end{figure}
 
\begin{figure}
        \centering
        \includegraphics[width=0.75\linewidth]{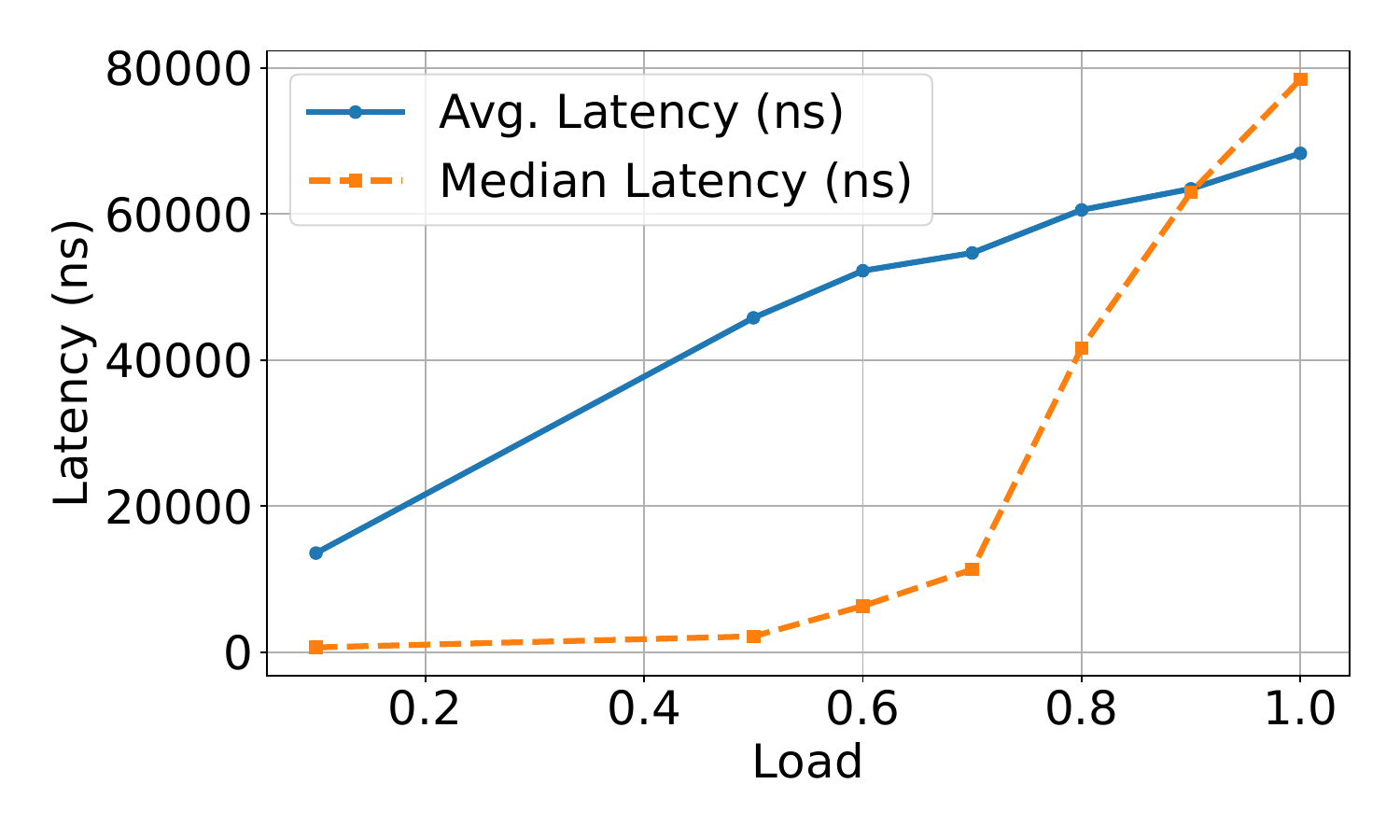}
        \captionof{figure}{\textmd{Characterizing the setup used for simulations.}}
        \label{fig:load_vs_latency}
\end{figure}

\begin{figure}
        \centering
        \includegraphics[width=0.75\linewidth]{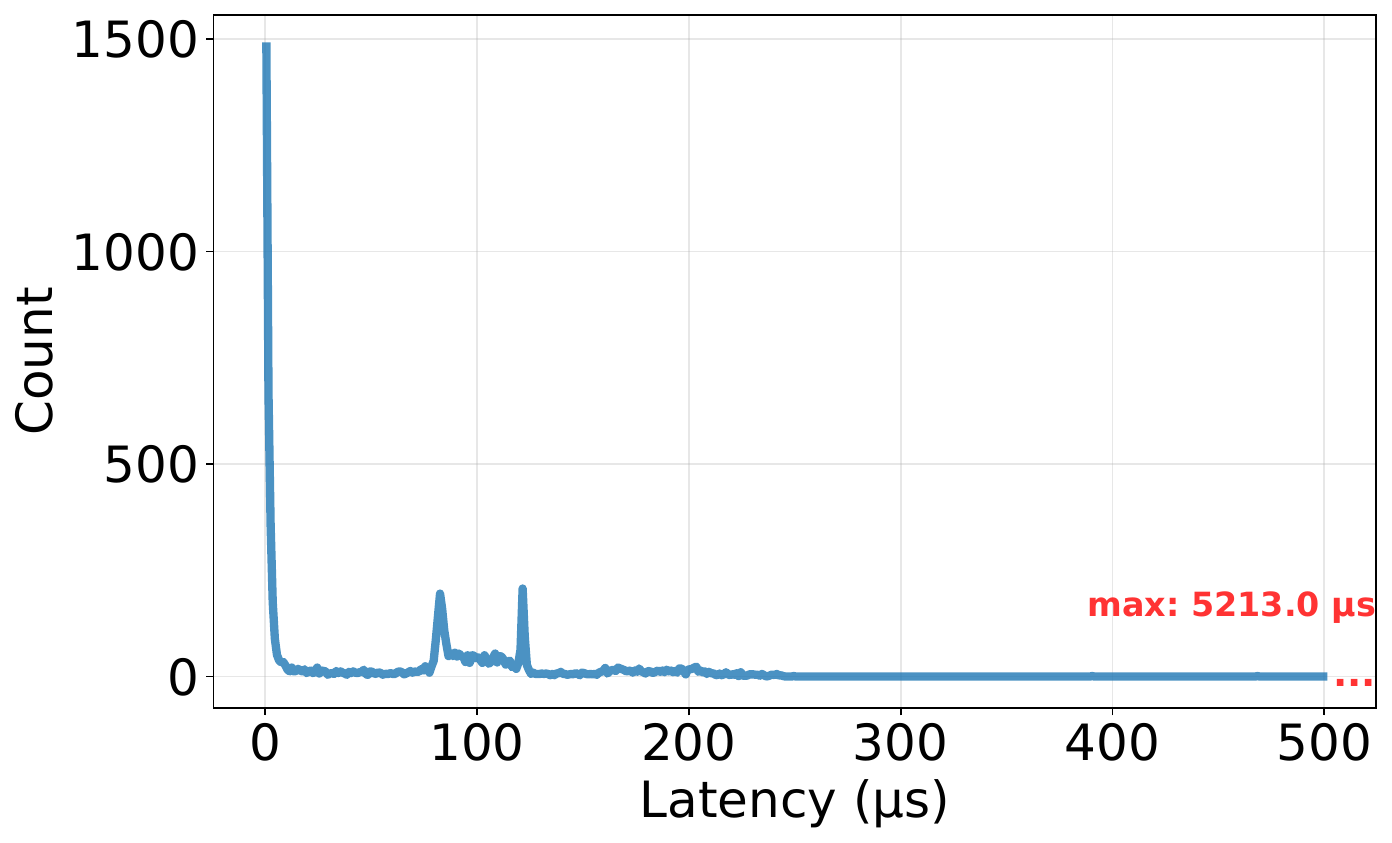}
        \captionof{figure}{\textmd{Histogram of latencies experienced by packets (load=1.0)}}
        \label{fig:dcn_latencies}
\end{figure}

\begin{figure}
    \centering
    \includegraphics[width=0.95\linewidth]{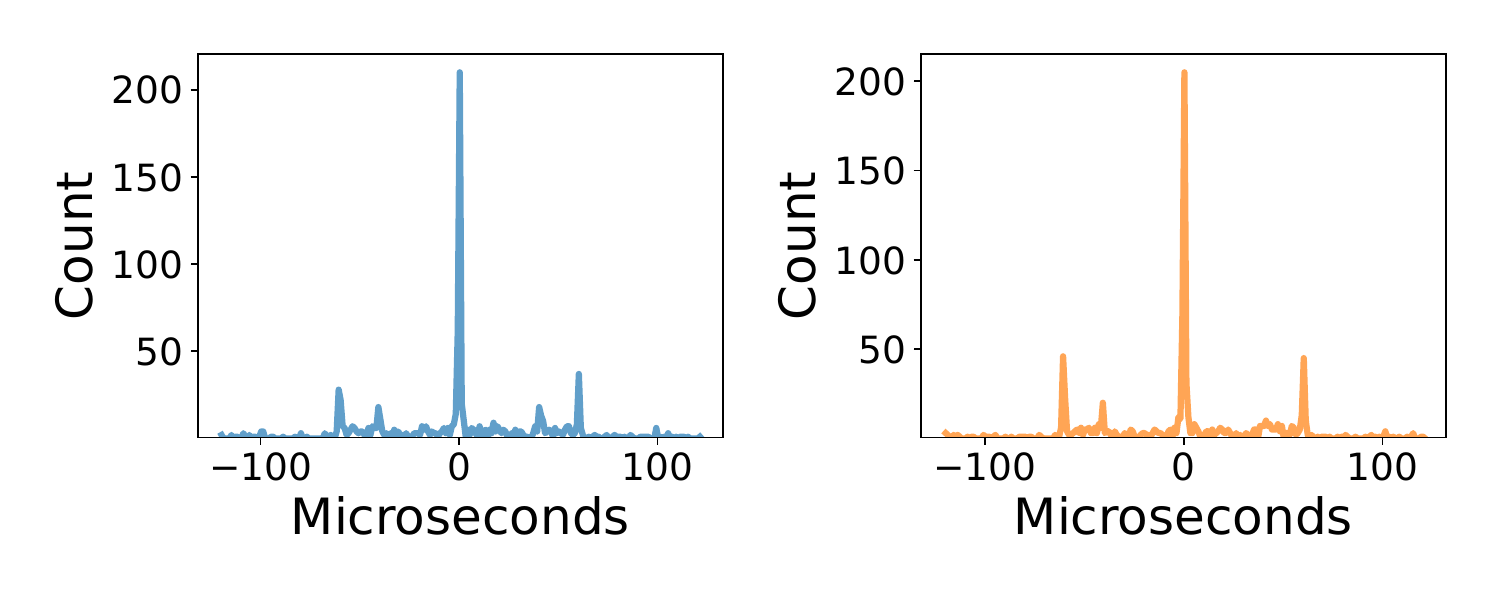}
    \caption{Clock correction distributions for two clients}
    \label{fig:correction_dists}
\end{figure}

% \begin{figure*}
%     \centering

%     \begin{minipage}[t]{0.32\linewidth}
%         \centering
%         \includegraphics[width=\linewidth]{figures/workload_msgsizes_cdf.pdf}
%         \vspace{-0.7cm}
%         \captionof{figure}{\textmd{DCN workload from Homa~\cite{homa, homa_ns3}}}
%         \label{fig:dctcp-workload}
%     \end{minipage}
%     \hfill
%     \begin{minipage}[t]{0.32\linewidth}
%         \centering
%         \includegraphics[width=\linewidth]{figures/latency_vs_load.pdf}
%         \vspace{-0.7cm}
%         \captionof{figure}{\textmd{Characterizing the setup used for simulations.}}
%         \label{fig:load_vs_latency}
%     \end{minipage}
%     \hfill
%     \begin{minipage}[t]{0.32\linewidth}
%         \centering
%         \includegraphics[width=\linewidth]{figures/dcn_latencies.pdf}
%         \vspace{-0.7cm}
%         \captionof{figure}{\textmd{Histogram of latencies experienced by packets (load=1.0)}}
%         \label{fig:dcn_latencies}
%     \end{minipage}

% \end{figure*}

%% file: main.bib
@inproceedings{dbo,
author = {Gupta, Eashan and Goyal, Prateesh and Marinos, Ilias and Zhao, Chenxingyu and Mittal, Radhika and Chandra, Ranveer},
title = {DBO: Fairness for Cloud-Hosted Financial Exchanges},
year = {2023},
isbn = {9798400702365},
publisher = {Association for Computing Machinery},
address = {New York, NY, USA},
url = {https://doi.org/10.1145/3603269.3604871},
doi = {10.1145/3603269.3604871},
booktitle = {Proceedings of the ACM SIGCOMM 2023 Conference},
pages = {550–563},
numpages = {14},
location = {New York, NY, USA},
series = {ACM SIGCOMM '23}
}

@misc{onyx_report,
      title={Design and Implementation of a Scalable Financial Exchange in the Public Cloud}, 
      author={Muhammad Haseeb and Jinkun Geng and Daniel Duclos-Cavalcanti and Ulysses Butler and Xiyu Hao and Anirudh Sivaraman and Srinivas Narayana},
      year={2025},
      eprint={2402.09527},
      archivePrefix={arXiv},
      primaryClass={cs.NI},
      url={https://arxiv.org/abs/2402.09527}, 
}

@inproceedings{onyx,
author = {Haseeb, Muhammad and Geng, Jinkun and Duclos-Cavalcanti, Daniel and Hao, Xiyu and Butler, Ulysses and Mittal, Radhika and Narayana, Srinivas and Sivaraman, Anirudh},
title = {Network Support For Scalable And High Performance Cloud Exchanges},
year = {2025},
isbn = {9798400715242},
publisher = {Association for Computing Machinery},
address = {New York, NY, USA},
url = {https://doi.org/10.1145/3718958.3750530},
doi = {10.1145/3718958.3750530},
booktitle = {Proceedings of the ACM SIGCOMM 2025 Conference},
pages = {1110–1131},
numpages = {22},
location = {S\~{a}o Francisco Convent, Coimbra, Portugal},
series = {SIGCOMM '25}
}

@inproceedings{cloudex,
author = {Ghalayini, Ahmad and Geng, Jinkun and Sachidananda, Vighnesh and Sriram, Vinay and Geng, Yilong and Prabhakar, Balaji and Rosenblum, Mendel and Sivaraman, Anirudh},
title = {CloudEx: A Fair-Access Financial Exchange in the Cloud},
year = {2021},
isbn = {9781450384384},
publisher = {Association for Computing Machinery},
address = {New York, NY, USA},
url = {https://doi.org/10.1145/3458336.3465278},
doi = {10.1145/3458336.3465278},
booktitle = {Proceedings of the Workshop on Hot Topics in Operating Systems},
pages = {96–103},
numpages = {8},
location = {Ann Arbor, Michigan},
series = {HotOS '21}
}

@inproceedings{jasper,
author = {Haseeb, Muhammad and Geng, Jinkun and Butler, Ulysses and Hao, Xiyu and Duclos-Cavalcanti, Daniel and Sivaraman, Anirudh},
title = {POSTER: Jasper, A Scalable and Fair Multicast for Financial Exchanges in the Cloud},
year = {2024},
isbn = {9798400707179},
publisher = {Association for Computing Machinery},
address = {New York, NY, USA},
url = {https://doi.org/10.1145/3672202.3673728},
doi = {10.1145/3672202.3673728},
booktitle = {Proceedings of the ACM SIGCOMM 2024 Conference: Posters and Demos},
pages = {36–38},
numpages = {3},
location = {Sydney, NSW, Australia},
series = {ACM SIGCOMM Posters and Demos '24}
}

@inproceedings {nopaxos,
author = {Jialin Li and Ellis Michael and Naveen Kr. Sharma and Adriana Szekeres and Dan R. K. Ports},
title = {Just Say {NO} to Paxos Overhead: Replacing Consensus with Network Ordering},
booktitle = {12th USENIX Symposium on Operating Systems Design and Implementation (OSDI 16)},
year = {2016},
isbn = {978-1-931971-33-1},
address = {Savannah, GA},
pages = {467--483},
url = {https://www.usenix.org/conference/osdi16/technical-sessions/presentation/li},
publisher = {USENIX Association},
month = nov
}

@inproceedings {hydra,
author = {Inho Choi and Ellis Michael and Yunfan Li and Dan R. K. Ports and Jialin Li},
title = {Hydra: {Serialization-Free} Network Ordering for Strongly Consistent Distributed Applications},
booktitle = {20th USENIX Symposium on Networked Systems Design and Implementation (NSDI 23)},
year = {2023},
isbn = {978-1-939133-33-5},
address = {Boston, MA},
pages = {293--320},
url = {https://www.usenix.org/conference/nsdi23/presentation/choi},
publisher = {USENIX Association},
month = apr
}

@article{paxos,
author = {Lamport, Leslie},
title = {The part-time parliament},
year = {1998},
issue_date = {May 1998},
publisher = {Association for Computing Machinery},
address = {New York, NY, USA},
volume = {16},
number = {2},
issn = {0734-2071},
url = {https://doi.org/10.1145/279227.279229},
doi = {10.1145/279227.279229},
journal = {ACM Trans. Comput. Syst.},
month = may,
pages = {133–169},
numpages = {37}
}

@inproceedings{raft,
author = {Ongaro, Diego and Ousterhout, John},
title = {In search of an understandable consensus algorithm},
year = {2014},
isbn = {9781931971102},
publisher = {USENIX Association},
address = {USA},
booktitle = {Proceedings of the 2014 USENIX Conference on USENIX Annual Technical Conference},
pages = {305–320},
numpages = {16},
location = {Philadelphia, PA},
series = {USENIX ATC'14}
}

@article{spanner,
author = {Corbett, James C. and Dean, Jeffrey and Epstein, Michael and Fikes, Andrew and Frost, Christopher and Furman, J. J. and Ghemawat, Sanjay and Gubarev, Andrey and Heiser, Christopher and Hochschild, Peter and Hsieh, Wilson and Kanthak, Sebastian and Kogan, Eugene and Li, Hongyi and Lloyd, Alexander and Melnik, Sergey and Mwaura, David and Nagle, David and Quinlan, Sean and Rao, Rajesh and Rolig, Lindsay and Saito, Yasushi and Szymaniak, Michal and Taylor, Christopher and Wang, Ruth and Woodford, Dale},
title = {Spanner: Google’s Globally Distributed Database},
year = {2013},
issue_date = {August 2013},
publisher = {Association for Computing Machinery},
address = {New York, NY, USA},
volume = {31},
number = {3},
issn = {0734-2071},
url = {https://doi.org/10.1145/2491245},
doi = {10.1145/2491245},
journal = {ACM Trans. Comput. Syst.},
month = aug,
articleno = {8},
numpages = {22}
}

@misc{shoebot1,
  author       = {Nike Shoe Bot},
  title        = {Nike Shoe Bot - The Ultimate Sneaker Bot},
  year         = {2025},
  url          = {https://www.nikeshoebot.com/},
  note         = {Accessed: 2025-03-19}
}

@misc{shoebot2,
  author       = {{AIO Bot}},
  title        = {AIO Bot | The Ultimate Sneaker Bot for Automatic Copping},
  year         = 2025,
  url          = {https://www.aiobot.com/},
  note         = {Accessed: 2025-03-19}
}

@misc{bot3,
  author       = {{Kasada}},
  title        = {NFT Bots: How They Work and How to Stop Them},
  year         = 2025,
  url          = {https://www.kasada.io/nft-bots/},
  note         = {Accessed: 2025-03-19}
}

@inproceedings {huygens,
author = {Yilong Geng and Shiyu Liu and Zi Yin and Ashish Naik and Balaji Prabhakar and Mendel Rosenblum and Amin Vahdat},
title = {Exploiting a Natural Network Effect for Scalable, Fine-grained Clock Synchronization},
booktitle = {15th USENIX Symposium on Networked Systems Design and Implementation (NSDI 18)},
year = {2018},
isbn = {978-1-939133-01-4},
address = {Renton, WA},
pages = {81--94},
url = {https://www.usenix.org/conference/nsdi18/presentation/geng},
publisher = {USENIX Association},
month = apr
}

@inproceedings{aerial_marketplace,
author = {Balasingam, Arjun and Gopalakrishnan, Karthik and Mittal, Radhika and Alizadeh, Mohammad and Balakrishnan, Hamsa and Balakrishnan, Hari},
title = {Toward a Marketplace for Aerial Computing},
year = {2021},
isbn = {9781450385992},
publisher = {Association for Computing Machinery},
address = {New York, NY, USA},
url = {https://doi.org/10.1145/3469259.3470485},
doi = {10.1145/3469259.3470485},
booktitle = {Proceedings of the 7th Workshop on Micro Aerial Vehicle Networks, Systems, and Applications},
pages = {1–6},
numpages = {6},
location = {Virtual, WI, USA},
series = {Dronet '21}
}

@article{lundelius_clock,
title = {An upper and lower bound for clock synchronization},
journal = {Information and Control},
volume = {62},
number = {2},
pages = {190-204},
year = {1984},
issn = {0019-9958},
doi = {https://doi.org/10.1016/S0019-9958(84)80033-9},
url = {https://www.sciencedirect.com/science/article/pii/S0019995884800339},
author = {Jennifer Lundelius and Nancy Lynch}
}

@ARTICLE{limits_on_clock_sync,
  author={Freris, Nikolaos M. and Graham, Scott R. and Kumar, P. R.},
  journal={IEEE Transactions on Automatic Control}, 
  title={Fundamental Limits on Synchronizing Clocks Over Networks}, 
  year={2011},
  volume={56},
  number={6},
  pages={1352-1364},
  doi={10.1109/TAC.2010.2089210}}

@article{leslie_ordering,
author = {Lamport, Leslie},
title = {Time, clocks, and the ordering of events in a distributed system},
year = {1978},
issue_date = {July 1978},
publisher = {Association for Computing Machinery},
address = {New York, NY, USA},
volume = {21},
number = {7},
issn = {0001-0782},
url = {https://doi.org/10.1145/359545.359563},
doi = {10.1145/359545.359563},
journal = {Commun. ACM},
month = jul,
pages = {558–565},
numpages = {8}
}

@article{efron_dice,
author = {Richard P. Savage Jr. and},
title = {The Paradox of Nontransitive Dice},
journal = {The American Mathematical Monthly},
volume = {101},
number = {5},
pages = {429--436},
year = {1994},
publisher = {Taylor \& Francis},
doi = {10.1080/00029890.1994.11996968},
URL = {https://doi.org/10.1080/00029890.1994.11996968},
eprint = { https://doi.org/10.1080/00029890.1994.11996968}
}

@misc{adex,
  author       = {Amazon Ads},
  title        = {What is Real-Time Bidding (RTB)? Definition and Importance},
  howpublished = {\url{https://advertising.amazon.com/library/guides/real-time-bidding}},
  note         = {Accessed: 2025-04-07}
}

@misc{equal_wire_length,
author = {},
title = {{Beyond Flash Boys: Improving Transparency and Fairness in Financial Markets}},
howpublished = {\url{http://video.cfainstitute.org/services/player/bcpid3577743869001?bckey=AQ~~,AAABE5oc3_E~,Leu10fA0D1sc9Dh9wz3oyrstQJ-PkzpJ&bctid=4436841897001}},
month = {},
year = {},
note = {Accessed: 2021-02-02}
}

@misc{chrony,
author = {},
title = {{chrony}},
howpublished = {\url{https://chrony-project.org/}},
month = {},
year = {},
note = {Accessed: 2026-02-03}
}

@misc{homa_ns3,
author = {},
title = {{Homa}},
howpublished = {\url{https://github.com/serhatarslan-hub/HomaL4Protocol-ns-3}},
month = {},
year = {},
note = {Accessed: 2026-02-03}
}

@inproceedings{homa,
author = {Montazeri, Behnam and Li, Yilong and Alizadeh, Mohammad and Ousterhout, John},
title = {Homa: a receiver-driven low-latency transport protocol using network priorities},
year = {2018},
isbn = {9781450355674},
publisher = {Association for Computing Machinery},
address = {New York, NY, USA},
url = {https://doi.org/10.1145/3230543.3230564},
doi = {10.1145/3230543.3230564},
booktitle = {Proceedings of the 2018 Conference of the ACM Special Interest Group on Data Communication},
pages = {221–235},
numpages = {15},
location = {Budapest, Hungary},
series = {SIGCOMM '18}
}

@misc{cme,
author = {CME},
title = {{Chicago Mercantile Exchange - Futures \& Options Trading}},
howpublished = {\url{https://www.cmegroup.com}},
month = {},
year = {},
note = {Accessed: 2021-05-09}
}

@inproceedings{syncms-2002,
  title={{Sync-MS: Synchronized Messaging Service for Real-Time Multi-Player Distributed Games}},
  author={Lin, Yow-Jian and Guo, Katherine and Paul, Sanjoy},
  booktitle={10th IEEE International Conference on Network Protocols, 2002. Proceedings.},
  pages={155--164},
  year={2002},
  organization={IEEE}
}

@Misc{aws-clock-sync,
  author	= {{AWS}},
  title		= {{Amazon Time Sync Service expands Microsecond-Accurate time to 87 additonal EC2 instance types}},
  howpublished	= {\url{https://aws.amazon.com/about-aws/whats-new/2024/04/amazon-time-sync-service-microsecond-accurate-time-additonal-ec2-instance-types/}},
  year={2024},
  note   = {Accessed: 08/31/2024}
}

@InProceedings{nsdi22-graham,
  author    = {Ali Najafi and Michael Wei},
  title     = {Graham: Synchronizing Clocks by Leveraging Local Clock Properties},
  booktitle = {Proceedings of the 19th USENIX Symposium on Networked Systems Design and Implementation (NSDI 2022)},
  year      = {2022},
  pages     = {453--466},
  publisher = {USENIX Association},
  address   = {Renton, WA, USA},
  month     = apr,
  url       = {https://www.usenix.org/conference/nsdi22/presentation/najafi},
}

@article{condorcet-book,
    author = {Sugden, Robert},
    title = {Condorcet: Foundations of Social Choice and Political Theory.},
    journal = {The Economic Journal},
    volume = {105},
    number = {432},
    pages = {1296-1297},
    year = {1995},
    month = {09},
    issn = {0013-0133},
    doi = {10.2307/2235427},
    url = {https://doi.org/10.2307/2235427},
    eprint = {https://academic.oup.com/ej/article-pdf/105/432/1296/27040852/ej1296.pdf},
}

@article{smithset,
 ISSN = {00129682, 14680262},
 URL = {http://www.jstor.org/stable/1914033},
 author = {John H. Smith},
 journal = {Econometrica},
 number = {6},
 pages = {1027--1041},
 publisher = {[Wiley, Econometric Society]},
 title = {Aggregation of Preferences with Variable Electorate},
 urldate = {2026-01-22},
 volume = {41},
 year = {1973}
}

@Article{ntp,
  author    = {Mills, D. L.},
  title     = {Internet Time Synchronization: The Network Time Protocol},
  journal   = {IEEE Transactions on Communications},
  year      = {1991},
  volume    = {39},
  number    = {10},
  pages     = {1482--1493},
  doi       = {10.1109/26.103043}
}

@inproceedings {osdi20-pompe,
author = {Yunhao Zhang and Srinath Setty and Qi Chen and Lidong Zhou and Lorenzo Alvisi},
title = {Byzantine Ordered Consensus without Byzantine Oligarchy},
booktitle = {14th USENIX Symposium on Operating Systems Design and Implementation (OSDI 20)},
year = {2020},
isbn = {978-1-939133-19-9},
pages = {633--649},
url = {https://www.usenix.org/conference/osdi20/presentation/zhang-yunhao},
publisher = {USENIX Association},
month = nov
}

@article{clock-temperature-report,
  title={Frequency accuracy \& stability dependencies of crystal oscillators},
  author={Zhou, Hui and Nicholls, Charles and Kunz, Thomas and Schwartz, Howard},
  journal={Carleton University, Systems and Computer Engineering, Technical Report SCE-08-12},
  year={2008},
  publisher={Carleton University Ottawa, ON, USA}
}

@article{tarjan-algo,
author = {Tarjan, Robert},
title = {Depth-First Search and Linear Graph Algorithms},
journal = {SIAM Journal on Computing},
volume = {1},
number = {2},
pages = {146-160},
year = {1972},
doi = {10.1137/0201010},
URL = {https://doi.org/10.1137/0201010},
eprint = {https://doi.org/10.1137/0201010},
}

@inproceedings{sigcomm23-neobft,
author = {Sun, Guangda and Jiang, Mingliang and Khooi, Xin Zhe and Li, Yunfan and Li, Jialin},
title = {NeoBFT: Accelerating Byzantine Fault Tolerance Using Authenticated In-Network Ordering},
year = {2023},
isbn = {9798400702365},
publisher = {Association for Computing Machinery},
address = {New York, NY, USA},
url = {https://doi.org/10.1145/3603269.3604874},
doi = {10.1145/3603269.3604874},
booktitle = {Proceedings of the ACM SIGCOMM 2023 Conference},
pages = {239–254},
numpages = {16},
location = {New York, NY, USA},
series = {ACM SIGCOMM '23}
}

@InProceedings{nsdi26-switchbft,
  author    = {Lior Zeno and Naama Ben-David and Mark Silberstein},
  title     = {In Link We Trust: BFT at the Speed of CFT using Switches},
  booktitle = {Proceedings of the 23rd USENIX Symposium on Networked Systems Design and Implementation (NSDI 2026)},
  year      = {2021},
  url       = {https://www.usenix.org/conference/nsdi26/presentation/zeno}
}
